\documentclass[11pt,a4paper,UKenglish,numberwithinsect]{article}

\usepackage[utf8]{inputenc}
\usepackage{fontenc}

\usepackage[left=2.5cm,right=2.5cm,top=2.5cm,bottom=3cm]{geometry}
\usepackage[pdfborder={0 0 0}]{hyperref}

\parindent = 0pt
\parskip = \medskipamount

\usepackage{amsmath,amsthm,amssymb}
\usepackage{mathtools}
\theoremstyle{theorem}
\newtheorem{theorem}{Theorem}[section]

\newtheorem{lemma}[theorem]{Lemma}

\theoremstyle{plain}
\newtheorem{example}[theorem]{Example}
\newtheorem{fact}[theorem]{Fact}

\newtheorem{definition}[theorem]{Definition}
\newtheorem{claim}[theorem]{Claim}

\bibliographystyle{plainurl}

\usepackage{tikz}

% unicolors
\definecolor{uni}{HTML}{006374}
\definecolor{uniorange}{RGB}{236,116,4}
\definecolor{unired}{RGB}{228,32,50}
\definecolor{uniyellow}{RGB}{250,187,0}
\definecolor{unigreen}{RGB}{149,188,14}
\definecolor{uniblue}{RGB}{0,106,163}

\usepackage{booktabs}

\def\arena{\mathrm{arena}}
\def\colosseum{\mathrm{colosseum}}
\def\pit{\mathrm{pit}}

\usepackage{paralist}
\usepackage{listings}
\lstset{
  morekeywords={for, while, true, false, if, return, do, end, then, else},
  keywordstyle=\color{black}\bfseries,
  columns=fullflexible,
  basicstyle=\normalsize,
  morecomment=[l]{//},
  morecomment=[s]{/*}{*/},
  backgroundcolor=\color{white},
  numberstyle=\small,
  mathescape=true
}

\def\tw{\mathrm{tw}}
\def\Desc{\mathrm{Desc}}
\def\td{\mathrm{td}}
\def\pw{\mathrm{pw}}
\def\twq{\mathrm{tw}_q}
\def\dtw{\mathrm{dtw}}
\def\ball{\mathcal{B}}

\title{Positive-Instance Driven Dynamic Programming\\ for Graph Searching}

\author{Max Bannach\kern1pt$^1$ \and Sebastian Berndt\kern1pt$^2$}

\date{\small%
  $^1$~Institute for Theoretical Computer Science,
  Universit\"at zu L\"ubeck,
  L\"ubeck, Germany \\
  \texttt{bannach@tcs.uni-luebeck.de}\\\vspace{2ex}
  $^2$~Department of Computer Science, Kiel University, Kiel, Germany\\
  \texttt{seb@informatik.uni-kiel.de}
}

\begin{document}

\maketitle 

\begin{abstract}
  \noindent Research on the similarity of a graph to being a tree~--~called the \emph{treewidth}
  of the graph~--~has seen an enormous rise within the last decade, but a
  practically fast algorithm for this task has been discovered only recently by
  Tamaki (ESA 2017). It is based on dynamic programming and makes use of the
  fact that the number of positive subinstances is typically substantially
  smaller than the number of all subinstances. Algorithms producing only such
  subinstances are called \emph{positive-instance driven} (PID). We give an
  alternative and intuitive view on this algorithm from the perspective of
  the corresponding configuration graphs in certain two-player games. This
  allows us to develop 
  PID-algorithms for a wide range of important graph parameters such
  as treewidth, pathwidth, and treedepth. We analyse the worst case behaviour of the
  approach on some well-known graph classes and perform an
  experimental evaluation on real world and random graphs.
\end{abstract}
\section{Introduction}
Treewidth, a concept to measure the similarity of a graph to being a tree, is arguably
one of the most used tools in modern combinatorial
optimization. It is a cornerstone of parameterized
algorithms~\cite{cygan:2015fr} and its success has led to its integration into
many different fields: For instance, treewidth and its close relatives treedepth
and pathwidth have been theoretically studied in the context of machine
learning~\cite{berg2014learning,darwiche2003differential,elidan2008learning},
model-checking~\cite{BannachB18,KneisLR11},
\textsc{SAT}-solving~\cite{BjesseKDSZ03,FichteHWZ18,HabetPT09},
\textsc{QBF}-solving~\cite{CharwatW16,EibenGO2018},
\textsc{CSP}-solving~\cite{KarakashianWC13,KosterHK02}, or
\textsc{ILPs}~\cite{EisenbrandHK18,GanianOR17,GanianO18,KouteckyLO18,Szeider03}.
Some of these results
(e.\,g.~\cite{BannachB18,BjesseKDSZ03,CharwatW16,FichteHWZ18,HabetPT09,KarakashianWC13,KneisLR11,KosterHK02})
show quite promising experimental results giving hope that the theoretical
results  lead to actual practical improvements. 

To utilize the treewidth for this task, we have to be able to compute it
quickly. More crucially, most algorithms also need a witness for this fact in
form of a tree-decomposition. In theory we have a beautiful algorithm for this
task~\cite{Bodlaender96}, which is unfortunately known to \emph{not} work in
practice due to huge constants~\cite{Roehrig98}. We may argue that, instead, a heuristic is
sufficient, as the attached solver will work correctly independently of the
actual treewidth~--~and the heuristic may produce a decomposition of ``small
enough'' width. However, 
even a small error, something as ``off by 5,'' may put the parameter
to a computationally intractable range, as the dependency on the
treewidth is usually at least exponential. It is therefore a very
natural and important task to build practical fast algorithms to
determine parameters as the treewidth or treedepth exactly.

To tackle this problem, the fpt-community came up with an implementation
challenge: the PACE~\cite{PaceIpec16,PaceIpec17}. Besides many, one very
important result of the challenge was a new combinatorial algorithm due to Hisao
Tamaki, which computes the treewidth of an input graph exactly and astonishingly
fast on a wide range of instances. An implementation of this algorithm by Tamaki
himself~\cite{Tamaki16} won the corresponding track in the PACE challenge in
2016~\cite{PaceIpec16} and an alternative implementation due to Larisch and
Salfelder~\cite{LarischS17} won in 2017~\cite{PaceIpec17}.
The algorithm is based on a dynamic program by Arnborg et
al.~\cite{ArnborgCP1987} for computing tree decompositions. This
algorithm has a game theoretic characterisation that we will utilities
in order to apply Tamaki's approach to a broader range of problems.
It should be noted, however, that Tamaki has improved his algorithm
for the second iteration of the PACE by applying his framework to the
algorithm by Bouchitt\'{e} and
Todinca~\cite{BouchitteT02,Tamaki2017}. This algorithm has a game
theoretic characterisation as well~\cite{FominK10}, but it is unclear
how this algorithm can be generalized to other parameters.
Therefore, we focus on Tamaki's first algorithm and analyze it both, from a theoretical and
a practical perspective. Furthermore, we will extend the algorithm to further graph parameters,
which is surprisingly easy due to the new game-theoretic
representation. In detail, our contributions are the following:

\begin{description}
\item[Contribution I: A simple description of Tamaki's first algorithm.]~\newline We
describe Tamaki's algorithm based on a well-known graph
searching game for treewidth. This provides a nice link to known theory and
allows us to analyze the algorithm in depth.

\item[Contribution II. Extending Tamaki's algorithm to other
  parameters.]~\newline The game theoretic point-of-view allows us to extend the
algorithm to various other parameters that can be defined in terms of similar
games~--~including pathwidth, treedepth.

\item[Contribution III: Experimental and theoretical analysis.]~\newline
We provide, for the first time, theoretical bounds on the runtime of the algorithm on
certain graph classes. Furthermore, we
count the number of subinstances generated by the algorithm on
various random and named graphs.
\end{description}

\section{Graph Searching}
A \emph{tree decomposition} of a graph $G=(V,E)$ is a tuple
$(T,\iota)$ consisting of a rooted tree $T$ and a mapping $\iota$ from
nodes of $T$ to sets of vertices of $G$ (called \emph{bags})
such that (1) for all $v\in V$ the set $\{\,x\mid v\in\iota(x)\,\}$ is
nonempty and connected in $T$, and (2) for every edge $\{v,w\}\in E$
there is a node $m$ in $T$ with $\{v,w\}\subseteq\iota(m)$. The
\emph{width} of a tree decomposition is the maximum size of one of its
bags minus one, its \emph{depth} is the maximum of the width and the depth of~$T$. The \emph{treewidth} of~$G$, denoted by $\tw(G)$,
is the minimum width any tree decomposition of $G$ must have. If $T$
is a path we call $(T,\iota)$ a \emph{path decomposition}; if for all
nodes $x,y$ of $T$ we have $\iota(x)\subsetneq\iota(y)$ whenever $y$
is a descendent of $x$ we call $(T,\iota)$ a \emph{treedepth
  decomposition}; and if on any path from the root to a leaf there are
at most $q$ nodes with more then one children we call $(T,\iota)$ a
\emph{$q$-branched tree decomposition}. Analogous to the treewidth, we
define the \emph{pathwidth} and
\emph{$q$-branched-treewidth} of $G$, denoted by $\pw(G)$
and $\twq(G)$, respectively. The \emph{treedepth} $\td(G)$ is the minimum depth any treedepth decomposition must
have. The various parameters are illustrated in Figure~\ref{figure:tw}.

\begin{figure}
  \begin{center}\includegraphics{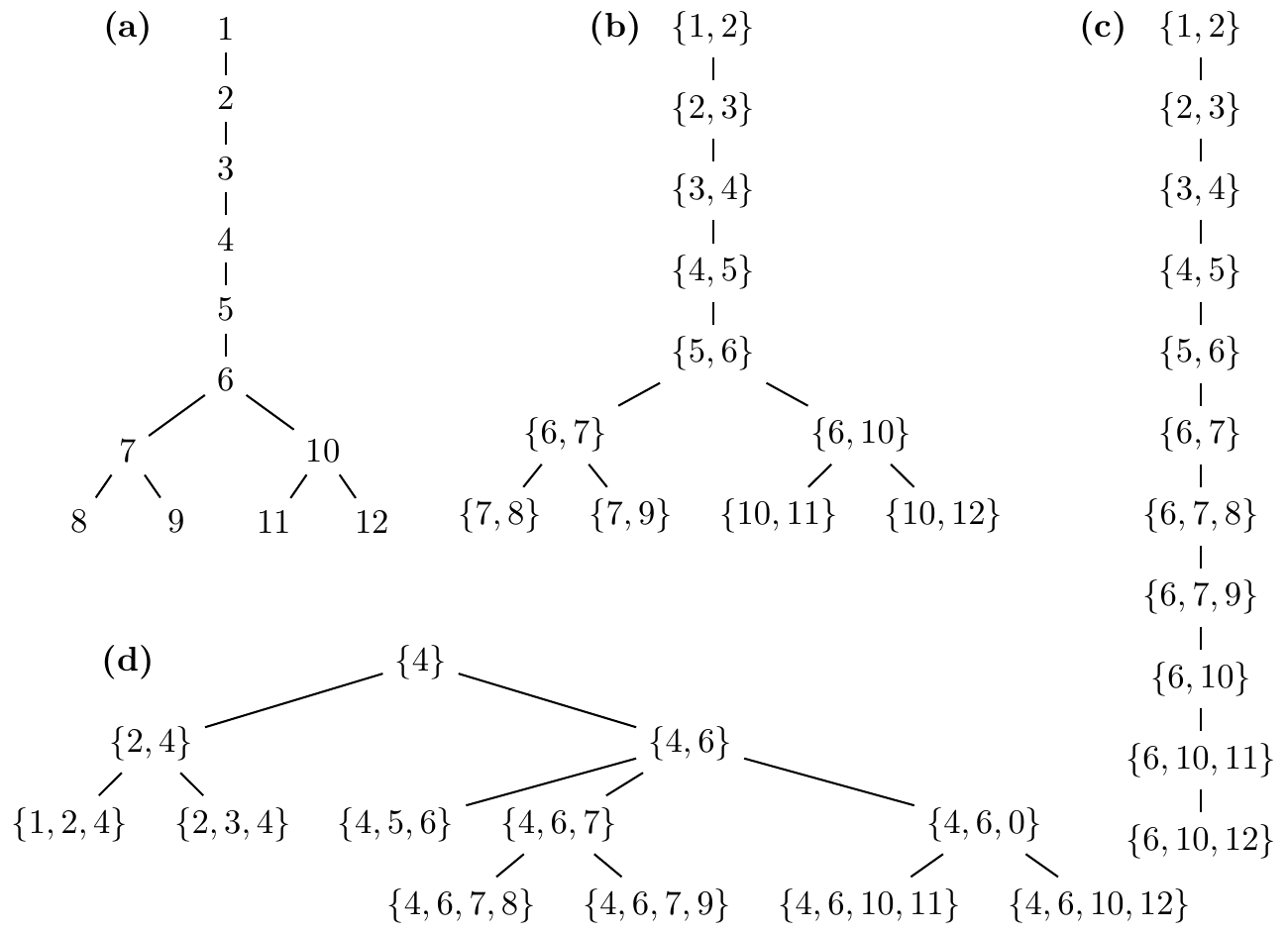}\end{center}
  \caption{Various tree decompositions of an undirected graph $G=(V,E)$ shown at
    {\bf(a)}. The decompositions justify {\bf(b)} $\tw(G)\leq 1$, {\bf(c)}
    $\pw(G)\leq 2$, and {\bf(d)} $\td(G)\leq 3$. With respect to
    $q$-branched treewidth the decompositions also justify {\bf(b)}
    $\tw_2(G)\leq 1$ and {\bf(c)} $\tw_0(G)\leq 2$.}
  \label{figure:tw}
\end{figure}

Another important variant of this parameter is
\emph{dependency-treewidth}, which is used primarily in the context of
quantified Boolean formulas~\cite{EibenGO2018}. For a graph $G=(V,E)$ and a
partial order $\lessdot$ of $V$ the
dependency-treewidth $\dtw(G)$ is the minimum width any
tree-decomposition $(T,\iota)$ with the following property must have:
Consider the natural partial order $\leq_T$ that $T$ induces on its
nodes, where the root is the smallest elements and the leaves form the
maximal elements; define for any $v\in V$ the node $F_v(T)$ that is
the $\leq_T$-minimal node $t$ with $v\in\iota(t)$ (which is well
defined); then define a partial order $<_{\mathcal{T}}$ on $V$ such
that $u<_{\mathcal{T}} v\Longleftrightarrow F_u(T)\leq_T
F_v(T)$; finally for all $u,v\in V$ it must hold that
$F_u(T)<_{T} F_v(T)$ implies that that $u\lessdot v$ does not hold.

We study classical graph searching in a general setting proposed by
Fomin, Fraigniaud, and Nisse~\cite{FominFN09}. The input is an
undirected graph $G=(V,E)$ and a number $k\in\mathbb{N}$, and the question
is whether a team of $k$ searchers can catch an \emph{invisible} fugitive on $G$ by the
following set of rules: At the beginning, the fugitive is placed at
a vertex of her choice and at any time, she knows the position of the
searchers. In every turn she may move with \emph{unlimited speed}
along edges of the graph, but may never cross a vertex occupied by
a searcher. This implies that the fugitive does not occupy a
single vertex but rather a subgraph, which is
separated from the rest of the graph by the searchers. The vertices of
this subgraph are called
\emph{contaminated} and at the start of the game all vertices are
contaminated.
The searchers, trying to catch the fugitive, can perform one of the
following operations during their turn:
\begin{compactenum}
  \item \emph{place} a searcher on a contaminated vertex;
  \item \emph{remove} a searcher from a vertex;
  \item \emph{reveal} the current position of the fugitive.
\end{compactenum}  
When a searcher is placed on a contaminated vertex it becomes
\emph{clean}. When a searcher is removed from a vertex $v$, the vertex
may become \emph{recontaminated} if there is a contaminated vertex
adjacent to $v$. The searchers win the game if they manage to clean
all vertices, i.\,e., if they catch the fugitive; the fugitive wins
if, at any point, a recontamination occurs, or if she can escape
infinitely long. Note that this implies that the
searchers have to catch the fugitive in a \emph{monotone} way.
A priori one could assume that the later condition gives the
fugitive an advantage (recontamination could be
necessary for the cleaning strategy), however, a
crucial result in graph searching is that ``recontamination does not
help'' in all variants of the game that we
consider~\cite{BienstockS1991,GiannopoulouHT12,LaPaugh1993,SeymourT1993,MazoitN08}.

\subsection{Entering the Arena and the Colosseum}
Our primary goal is to determine whether the searchers have a winning
strategy. A folklore algorithm for this task is to construct an
alternating graph $\arena(G,k)=((V_s\cup V_f),E_{\mathrm{ar}})$ that contains for each
position of the searchers ($S\subseteq V$ with $|S|\leq k$) and
each position of the fugitive ($f\in V$) two copies of the vertex 
$(S,f)$, one in $V_s$ and one in $V_f$ (see e.\,g.~Section~7.4~in~\cite{cygan:2015fr}). Vertices in $V_s$ correspond
to a configuration in which the searchers do the next move (they are
existential) and vertices in $V_f$ correspond to fugitive moves
(they are universal). The edges $E_{\mathrm{ar}}$ are constructed according to the possible moves.
Clearly, our task is now reduced to the question
whether there is an alternating path from a start configuration to
some configuration in which the fugitive is caught. Since alternating
paths can be computed in linear time (see e.\,g.,~Section~3.4~in~\cite{Immerman1999}), we immediately obtain an
$O(n^{k+1})$ algorithm.

Modeling a configuration of the game as tuple $(S,f)$ comes, however, with a
major drawback: The size of the arena does directly depend on $n$ and $k$ and
does \emph{not} depend on some further structure of the input. For instance, the
arena of a path of length $n$ and any other graph on $n$ vertices will have the
same size for any fixed value $k$. As the major goal of parameterized complexity
is the understanding of structural parameters beyond the input size $n$, such a
fixed-size approach is usually not practically feasible. In contrast, we will
define the configuration graph $\colosseum(G,k)$, which might be larger then
$\arena(G,k)$ in general, but is also ``prettier'' in the sense that it adapts
to the input structure of the graph. Moreover, the resulting algorithms are
\emph{self-adapting} in the sense that it needs no knowledge about this special
structure to make use of it (in constrast to other parameterized algorithms,
where the parameter describing this structure needs to be given explicitly).

\subsection{Simplifying the Game}

Our definition is based upon a similar formulation by Fomin et
al.~\cite{FominFN09}, but we
simplify the game to make it more accessible to our techniques. 
First of all, we restrict the fugitive in the following
sense. Since she is invisible to the searchers and travels with
unlimited speed, there is no need for her to take regular
actions. Instead, the only moment when she is actually active is when
the searchers perform a reveal. If $C$ is the set of contaminated
vertices, consisting of the induced components $C_1,\dots,C_{\ell}$, a
reveal will uncover the component in which the fugitive hides and,
as a result, reduce $C$ to $C_i$ for some $1\leq i\leq\ell$. The only
task of the fugitive is, thus, to answer a reveal with such a number
$i$. We call the whole process of the searcher performing a reveal, the
fugitive answering it, and finally of reducing $C$ to $C_i$ a \emph{reveal-move}.

We will also restrict the searchers by the concept of \emph{implicit searcher removal.} Let
$S\subseteq V(G)$ be the vertices currently occupied by the searchers, and
let $C\subseteq V(G)$ be the set of contaminated vertices. We
call a vertex $v\in S$ 
\emph{covered} if every path between $v$ and $C$ contains a vertex 
$w\in S$ with $w\neq v$.

\begin{lemma}\label{lemma:covered}
A covered searcher can be removed safely.
\end{lemma}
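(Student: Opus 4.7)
The plan is to argue directly that swapping $S$ for $S\setminus\{v\}$ does not enlarge the contaminated set. The operational characterization I would work with is that, at any state, a vertex is contaminated precisely if it lies in $C$ or is reachable from $C$ in the graph obtained from $G$ by deleting the current searcher set; this is what the unlimited-speed, invisibility rules amount to, and it underlies the informal ``recontamination'' description in the excerpt. With this characterization, ``safely'' means: for every $u\notin C$, there is still no path from $u$ to $C$ in $G-(S\setminus\{v\})$.

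First I would dispose of the vertex $v$ itself. Since $v$ is covered, every path from $v$ to $C$ meets some $w\in S\setminus\{v\}$, so $v$ is still separated from $C$ after removal; in particular $v$ is not adjacent to any vertex of $C$, which would otherwise yield a length-one witness path not passing through any other searcher.

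Next I would handle an arbitrary clean vertex $u\notin\{v\}\cup C$. The proof is by contradiction: suppose removal creates a path $P$ from $u$ to some $c\in C$ that avoids $S\setminus\{v\}$. Before removal $u$ was clean, so in $G-S$ no $u$-to-$C$ path existed; therefore $P$ must use $v$. Take the suffix of $P$ from $v$ to $c$: it is a path from $v$ to $C$ that avoids all of $S\setminus\{v\}$, contradicting the assumption that $v$ is covered.

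The only genuine step of the proof is this contradiction argument; the main obstacle, and the thing I would be careful to spell out, is the justification of the reachability characterization of contamination used throughout, since the excerpt only describes recontamination rule-by-rule. Once that equivalence is stated cleanly, both bullet points above follow in a line each, and no case analysis on the move sequence is needed.
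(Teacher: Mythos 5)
Your argument for the instantaneous effect of the removal is correct and is essentially the paper's first step, carried out in more detail: the paper simply notes that coveredness gives $N(v)\cap C=\emptyset$, so the removal does not enlarge the contaminated area, while your suffix-of-$P$ contradiction additionally shows that no other clean vertex can be reached from $C$ through the vacated vertex $v$ (your length-one special case is exactly $N(v)\cap C=\emptyset$). The reachability description of contamination you rely on is the intended reading of the invisible, unlimited-speed fugitive, so that part is unproblematic.

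What is missing is the second half of what ``safely'' has to deliver here. The lemma (together with Lemma~\ref{lemma:onlyCovered}) is used to justify \emph{implicit} searcher removal: the searchers may always discard a covered searcher at once, without loss of generality for the rest of the play. Since a searcher may only be \emph{placed on a contaminated vertex}, once $v$ is vacated it cannot be reoccupied while it stays clean, so you must rule out that keeping a searcher on $v$ could ever be useful later in the game. Your proof only shows that the single removal move causes no recontamination; it says nothing about subsequent moves in which the now-unguarded $v$ might be needed as a blocker. The paper closes this with a one-line monotonicity argument that your write-up should include: in any continuation in which the searchers have not already lost, the contaminated area only shrinks, so $N(v)$ remains clean and $v$ can become recontaminated only after one of its neighbours is recontaminated~--~at which point the game is lost anyway; hence the searcher standing on $v$ can never make a difference, and removing it is harmless for the entire remainder of the game, not just for the current move.
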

\begin{proof}
As we have $N(v)\cap C=\emptyset$, the removal of $v$ will not
increase the contaminated area. Furthermore, at no later point of the
game $v$ can be recontaminated, unless a neighbor of $v$ gets
recontaminated as well (in which case the game would already be lost
for the searchers).
\end{proof}
\begin{lemma}\label{lemma:onlyCovered}
Only covered searchers can be removed safely. 
\end{lemma}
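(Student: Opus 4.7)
The plan is to establish the converse of Lemma~\ref{lemma:covered}: if a searcher $v$ is \emph{not} covered, then removing it breaks the monotonicity of the cleaning strategy. By definition, ``not covered'' means there exists a path $P = v = u_0, u_1, \ldots, u_k = u$ in $G$ with $u \in C$ and $u_i \notin S$ for every $1 \le i \le k-1$ (that is, $P$ witnesses the connection between $v$ and the contaminated region through non-searcher vertices). This path will be the whole engine of the argument.

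Consider what happens after the removal of $v$: the searcher set becomes $S' = S \setminus \{v\}$, so every single vertex $u_0, u_1, \ldots, u_k$ of $P$ is now a non-searcher vertex (this is the point where $v$ itself joins the set of unguarded vertices). The contamination rule propagates $C$ through any non-searcher vertex adjacent to a contaminated one, so I would proceed by a short induction from $u_k$ backwards along $P$: the endpoint $u_k$ is already in $C$; inductively if $u_{i+1}$ is contaminated and $u_i \notin S'$, then $u_i$ becomes contaminated as well. This forces $v = u_0$ into the new contaminated set, whereas $v$ was clean before the removal (since searchers always sit on clean vertices). Hence the removal strictly enlarges the contaminated region, a recontamination occurs, and the move is not safe.

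I do not foresee any serious obstacle here; the statement is essentially the contrapositive of Lemma~\ref{lemma:covered}, and the only subtlety is to make sure the intermediate vertices of $P$ are legitimately treated as non-searchers both before and after the removal. The case $k = 1$ (where $v$ has a contaminated neighbor directly) is just the degenerate version of the same cascade and matches the wording of the contamination rule in the problem statement verbatim. No assumption about minimality of $P$ is actually needed, because the induction along any path avoiding $S \setminus \{v\}$ already brings the contamination back to $v$.
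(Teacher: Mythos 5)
Your proposal is correct and follows essentially the same route as the paper: the contrapositive, showing that removing a non-covered searcher $v$ causes recontamination of $v$ and hence a loss for the searchers. The only difference is cosmetic — the paper shortcuts via the invariant $N(C)\subseteq S$ to assert directly that a non-covered $w$ satisfies $N(w)\cap C\neq\emptyset$, whereas you propagate contamination along the witnessing path, which is a slightly more explicit version of the same argument.
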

\begin{proof}
Since for any other vertex $w\in S$ we have $N(w)\cap C\neq\emptyset$,
the removal of $w$ would recontaminate $w$ and, hence, would result in
a defeat of the searchers.
\end{proof}
Both lemmas together imply that the searchers never have to decide to
remove a searcher, but rather do it \emph{implicitly}. We thus
restrict the possible moves of the searchers to a combined move of
placing a searcher and \emph{immediately} removing the searchers from all covered
vertices. We call this a \emph{fly-move}. Observe that the sequence of
original moves mimicked by a fly-move does not contain a reveal and,
thus, may be performed independently of any action of the fugitive.

We are now ready to define the colosseum. We could, as for the arena,
define it as an alternating graph. However, as the searcher is the
only player that performs actions in our simplified game, we find it
more natural to express this game as \emph{edge-alternating
  graph}~--~a generalization of alternating graphs. An
edge-alternating graph is a triple $H=(V,E,A)$ consisting of a
\emph{vertex set} $V$, an existential edge relation $E\subseteq
V\times V$, and an universal edge relation $A\subseteq V\times V$.
We define the neighborhood of a vertex $v$ as $N_{\exists}(v)=\{\,w\mid(v,w)\in E\,\}$,
$N_{\forall}(v)=\{\,w\mid(v,w)\in A\,\}$, and $N_H(v)=N_{\exists}(v)\cup N_{\forall}(v)$.
An \emph{edge-alternating} $s$-$t$-path is a set $P\subseteq V$ such that
(1)~$s,t\in P$ and (2)~for all $v\in P$ with $v\neq t$ we have either \(
N_{\exists}(v)\cap P\neq\emptyset \) or \( \emptyset\neq
N_{\forall}(v)\subseteq P \) or both. We write $s\prec t$ if such a
path exists and define $\ball(Q)=\{\,v\mid v\in Q\vee(\exists w\in
Q\colon v\prec w)\,\}$ for $Q\subseteq V$ as the set of vertices on edge-alternating
paths leading to $Q$. We say that an edge-alternating $s$-$t$-path $P$
is \emph{$q$-branched}, if (i) $H$ is
acyclic and (ii) every (classical) directed path $\pi$ from $s$ to $t$
in $H$ with $\pi\subseteq P$ uses at most $q$ universal edges.

For an undirected graph $G=(V,E)$ and a number $k\in\mathbb{N}$ we now define
the $\colosseum(G,k)$ to be the edge-alternating graph $H$ with vertex set
$V(H)=\{\,C\mid \text{$\emptyset\neq C\subseteq V$ and $|N_G(C)|\leq k$}\,\}$ and
the following edge sets: for all pairs $C,C'\in V(H)$ there is an edge
$e=(C,C')\in E(H)$ if, and only if, $C\setminus\{v\}=C'$ for some
$v\in C$ and $|N_G(C)| < k$; furthermore, for all
$C\in V(H)$ with at least two components $C_1,\dots,C_{\ell}$ we have edges
$(C,C_i)\in A(H)$. The \emph{start configuration} of the game is the vertex
$C=V$, that is, all vertices are contaminated. We define $Q=\{\, \{v\}\subseteq
V \colon |N_G(\{v\})|<k\,\}$ to be the set of \emph{winning configurations}, as at least
one searcher is available to catch the fugitive. Therefore, the searchers have a
winning strategy if, and only if, $V\in \ball(Q)$ and we will therefore refer to
$\ball(Q)$ as the \emph{winning region}. Observe that the colosseum is
acyclic (that is, the digraph $(V,E\cup A)$ is acyclic) as we have for every edge $(C, C')$ that $|C|>|C'|$, and
observe further that $Q$ is a subset of the sinks of $H$. Hence, we can 
test if $V\in \ball(Q)$ in time $O(|\colosseum(G,k))|)$. Finally, note that
the size of $\colosseum(G,k)$ may be of order $2^n$ rather than $n^{k+1}$,
giving us a slightly worse overall runtime. 

The reader that is familiar with graph searching or with exact
algorithms for treewidth will probably notice the similarity of the
colosseum and an exact ``Robertson--Seymour fashioned'' algorithm as
sketched in Listing~\ref{alg:ggs}.

\begin{lstlisting}[float,caption={To get some intuition behind the
    colosseum, consider the following procedure. It is assumed that an input graph
    $G=(V,E)$ and a target number $k\in\mathbb{N}$ is globally
    available in memory. The procedure, when called with
    parameters $S=\emptyset$ and $C=V$, will determine whether $k$ searcher can
    catch the fugitive in the search game. Hereby, the set $S$ is
    always the current position of the searchers and $C$ is the
    contaminated area. We maintain the invariant $N(C)\subseteq S$, as the
searchers would lose otherwise due to recontamination. Observe that from any
    configuration $(S,C)$ the procedure will, without branching, move
    to $(N(C),C)$. These are exactly the configurations that are
  present in the colosseum. In fact, the colosseum is essentially the
  configuration graph of this procedure if it is used with memoization.},
  numbers=left, label=alg:ggs]
procedure $\algop{generalGraphSearching}(S,C)$

  // end of recursion
  if $|S|>k$ then // we need too many searchers
    return false
  end
  if $C=\emptyset$ then // the searchers cleaned the graph
    return true
  end

  // implicit searcher removal
  for $v\in S$ do
    if $N(v)\cap C=\emptyset$ then
      $S\leftarrow S\setminus\{v\}$
      return $\algop{generalGraphSearching}(S,C)$
    end
  end

  // reveal-move
  $C_1,\dots,C_{\ell}\leftarrow\algop{connectedComponents}(G[C])$
  if $\ell > 1$ then
    return $\bigwedge_{i=1}^{\ell}\algop{generalGraphSearching}(S,C_i)$
  end

  // fly-move
  return $\bigvee_{v\in C}\algop{generalGraphSearching}(S\cup\{v\},C\setminus\{v\})$
end
\end{lstlisting}

\clearpage
\subsection{Fighting in the Pit}
Both algorithms introduced in the previous section run asymptotically
in the size of the generated configuration graph 
$|\arena(G,k)|$ or $|\colosseum(G,k)|$. Both of these graphs might be very
large, as the arena has fixed size of order $O(n^{k+1})$, while the colosseum may even have size $O(2^n)$. Additionally,
both graphs contain many unnecessary configurations, that is,
configurations that are not contained in the winning region of the searchers.
In the light of dynamic programming this is the same
as listing all possible configurations; and in the light of
positive-instance driven dynamic programming we would like to
list only the positive instances~--~which is exactly the winning
region in this context.

To realize this idea, we consider the \emph{pit} inside the colosseum,
which is the area where only true champions can survive~--~formally we
define $\pit(G,k)$ as the subgraph of $\colosseum(G,k)$ induced by
$\ball(Q)$, that is, as the induced subgraph on the winning
region. The key-insight is that $|\pit(G,k)|$ may be smaller
than $|\colosseum(G,k)|$ or even $|\arena(G,k)|$ on various graph
classes. Our primary goal for the next section will therefore be the
development of an algorithm that computes the pit in time
$O(|\pit(G,k)|^2)$.

\section{Computing the Pit}
Our aim for this section is to develop an algorithm that computes
$\pit(G,k)$. Of course, a simple way to do this is to
compute the whole colosseum and to extract the pit
afterwards. However, this will cost time $O(2^n)$ and is surely not
what we aim for. Our algorithm traverses
the colosseum ``backwards'' by starting at the set $Q$ of winning
configurations and by uncovering $\ball(Q)$ layer by layer. In order
to achieve this, we need to compute the predecessors of a
configuration~$C$. This is easy if $C$ was reached by a
fly-move as we can simply enumerate the $n$ possible predecessors. 
Reversing a reveal-move, that is, finding the
universal predecessors, is significantly more involved. A simple
approach is to test for every subset of already explored
configurations if we can ``glue'' them together~--~but this would
result in an even worse runtime of $2^{|\pit(G,k)|}$. Fortunately, we
can avoid this exponential blow-up as the colosseum has the
following useful property:
\begin{definition}[Universal Consistent]
  We say that an edge-alternating graph $H=(V,E,A)$ is \emph{universal consistent}
  with respect to a set $Q\subseteq V$
  if for all $v\in V\setminus Q$ with $v\in\ball(Q)$ and
  $N_{\forall}(v)=\{w_1,\dots,w_r\}$ we have (1)
  $N_{\forall}(v)\subseteq\ball(Q)$ and (2)~for every
  $I\subseteq\{w_1,\dots,w_r\}$ with $|I|\geq 2$ there is a vertex $v'\in V$ with
  $N_{\forall}(v')=I$ and $v'\in\ball(Q)$.
\end{definition}

Intuitively, this definition implies that for every vertex with high
universal-degree there is a set of vertices that we can arrange in a
tree-like fashion to realize the same adjacency relation. This allows us to glue only two configurations at
a time and, thus, removes the exponential dependency. An example of the
definition can be found in Example~\ref{example:universalconsitent}.

\begin{example}\label{example:universalconsitent}
  Consider the following three edge-alternating graphs, where a black
  edge is existential and the \textcolor{red!50!black}{red} edges are
  universal. The set $Q$ contains a single vertex that is
  highlighted. From left to right: the first graph is universal
  consistent; the second and third one are not. The second graph
  conflicts the condition that $v\in\ball(Q)$ implies
  $N_{\forall}(v)\subseteq\ball(Q)$, as the vertex on the very left is
  contained in $\ball(Q)$ by the top path, while its universal
  neighbor on the bottom path is not contained in $\ball(Q)$. The
  third graph conflicts the condition that  $N_{\forall}(v)=\{w_1,\dots,w_r\}$ implies that for every
  $I\subseteq\{w_1,\dots,w_r\}$ with $|I|\geq 2$ there is a vertex $v'\in V$ with
  $N_{\forall}(v')=I$ and $v'\in\ball(Q)$ as witnessed by the vertex with
  three outgoing universal edges. 

  \vspace{2ex}
  \begin{center}
    \includegraphics{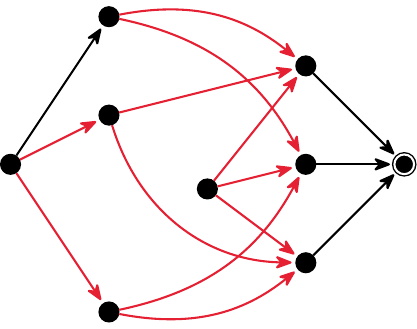}%
  \qquad%
    \includegraphics{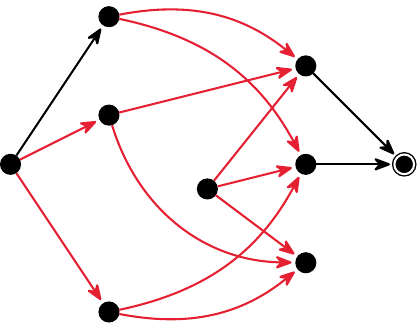}%
  \qquad%
    \includegraphics{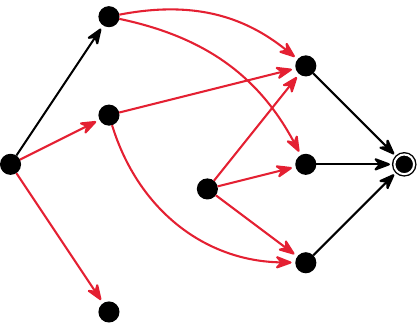}%
  \end{center}
\end{example}

\begin{lemma}\label{lemma:universal_consistent}
  For every graph $G=(V,E)$ and number $k\in \mathbb{N}$, the
  edge-alternating graph
  $\colosseum(G,k)$ is universal consistent.
\end{lemma}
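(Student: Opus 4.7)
The plan is to verify the two conditions of universal consistency for $H=\colosseum(G,k)$. Fix any $C\in V(H)\setminus Q$ with $C\in\ball(Q)$, and let $N_\forall(C)=\{C_1,\dots,C_\ell\}$; by the definition of the colosseum these are exactly the connected components of $G[C]$, with $\ell\geq 2$.

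For condition~(1) I prove by induction on $|C|$ that each $C_i\in\ball(Q)$. Consider the outgoing step that an edge-alternating path from $C$ to $Q$ uses at $C$. If it is universal, all $C_i$ are forced onto the path and the claim is immediate. If it is existential, it leads to $C'=C\setminus\{v\}$ for some $v$ in some component $C_j$, and requires $|N_G(C)|<k$, a bound that will be used repeatedly. Because vertex deletion never merges components across the original partition, the connected components of $G[C']$ are the $C_i$ for $i\neq j$ together with the components $D_1,\dots,D_m$ of $C_j\setminus\{v\}$. Applying the inductive hypothesis to $C'$ (which is itself a universal vertex as soon as it has at least two components) yields $C_i\in\ball(Q)$ for all $i\neq j$ and $D_t\in\ball(Q)$ for all $t$.

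It remains to handle the component $C_j$. If $C_j=\{v\}$, then $|N_G(\{v\})|\leq|N_G(C)|<k$ already places $\{v\}$ in $Q$. Otherwise $C_j\setminus\{v\}$ is a valid colosseum vertex because $|N_G(C_j\setminus\{v\})|\leq|N_G(C_j)|+1\leq k$, and it lies in $\ball(Q)$ either directly (when $m=1$, so it is itself a component of $C'$) or via a universal step gluing the $D_t$ together (when $m\geq 2$). The existential edge $C_j\to C_j\setminus\{v\}$, available because $|N_G(C_j)|\leq|N_G(C)|<k$, then places $C_j$ itself in $\ball(Q)$. For condition~(2), given $I\subseteq\{C_1,\dots,C_\ell\}$ with $|I|\geq 2$, the natural witness is $v'=\bigcup_{C_i\in I}C_i$: pairwise non-adjacency of the $C_i$ in $G$ shows that the components of $G[v']$ are exactly the elements of $I$, so $N_\forall(v')=I$; the inclusion $N_G(v')\subseteq N_G(C)$ shows that $v'\in V(H)$; and condition~(1), just established, puts $N_\forall(v')\subseteq\ball(Q)$, placing $v'\in\ball(Q)$ through one universal step.

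The main obstacle is the existential case of the induction for condition~(1), where the given path from $C$ touches only one of the components $C_j$, yet one must nevertheless recover a witness that $C_j$ alone can win. The argument rests on two structural features: that vertex deletion never merges distinct components of $G[C]$, which preserves the applicability of the inductive hypothesis on $C'$, and that $|N_G(\cdot)|$ can be tracked along the detour through $C_j\setminus\{v\}$ so that every intermediate configuration stays within the searcher budget $k$.
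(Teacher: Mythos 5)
Your proof is correct and, for condition~(2), it coincides with the paper's argument: the witness is $v'=\bigcup_{C_i\in I}C_i$, whose neighbourhood satisfies $N_G(v')\subseteq N_G(C)$ (so $v'\in V(\colosseum(G,k))$) and whose components are exactly the elements of $I$; you merely spell out the steps ($N_{\forall}(v')=I$ and $v'\in\ball(Q)$ via condition~(1)) that the paper leaves implicit. The genuine difference is condition~(1): the paper dispatches it with the one-sentence game intuition that ``reveals do not harm'' (a strategy that wins without knowing the fugitive's component certainly wins when the component is revealed), whereas you prove it by an explicit induction on $|C|$ inside the configuration graph, restricting a winning edge-alternating path to the component $C_j$ touched by its existential move at $C$ and reassembling a path for $C_j$ from the pieces $D_1,\dots,D_m$ of $C_j\setminus\{v\}$ via one universal step and the existential edge $C_j\to C_j\setminus\{v\}$. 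This buys a self-contained, purely combinatorial argument that in particular verifies what the informal strategy-restriction argument glosses over, namely that the intermediate configurations such as $C_j\setminus\{v\}$ really are colosseum vertices and that the needed existential edges exist, thanks to $|N_G(C_j)|\leq|N_G(C)|<k$ and $|N_G(C_j\setminus\{v\})|\leq|N_G(C_j)|+1\leq k$; the cost is length. One small caveat applies equally to you and to the paper: both arguments use the reading of $\ball(Q)$ under which a vertex all of whose universal successors lie in $\ball(Q)$ again lies in $\ball(Q)$, even though the paths witnessing the successors may end in different vertices of $Q$; under the paper's literal single-target definition of an edge-alternating $s$-$t$-path this gluing is not immediate, but it is the intended alternating-reachability semantics and is used in exactly the same way in the paper's proof of Theorem~\ref{theorem:pid}, so you are on the same footing there.
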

\begin{proof}
  For the first property just observe that ``reveals do not harm'' in
  the sense that if the searchers can catch the fugitive without
  knowing where she hides, they certainly can do if they do know.
  
  For the second property consider any configuration $C\in V(H)$ that has universal edges to
  $C_1,\dots,C_{\ell}$. By definition we have $|N(C)|\leq k$ and
  $N(C_i)\subseteq N(C)$ for all $1\leq i\leq\ell$. Therefore we have
  for every $I\subseteq\{1,\dots,\ell\}$ and $C'=\cup_{i\in I}C_i$ that
  $N(C')\subseteq N(C)$ and $|N(C')|\leq k$ and, thus, $C'\in V(H)$.
\end{proof}
We are now ready to formulate the algorithm for computing the pit shown in Listing~\ref{alg:pid:alternating}. In essence, the
algorithm runs in three phases: first it computes the set $Q$ of
winning configurations; then the winning region $\ball(Q)$ (that
is, the vertices of $\pit(G,k)$); and finally, it
computes the edges of $\pit(G,k)$.

\begin{theorem}\label{theorem:pid}
  The algorithm $\mathsf{Discover}(G,k)$
  finishes in at most $O\big(|\ball(Q)|^2\cdot |V|^{2}\big)$ steps and
  correctly outputs $\pit(G,k)$.
\end{theorem}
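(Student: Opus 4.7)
The plan is to decompose the proof into correctness of the three phases and a runtime analysis, with the crux being phase two. Phase one enumerates singletons $\{v\}$ with $|N_G(\{v\})|<k$, which is exactly $Q$ by definition, in $O(|V|^2)$ time. Phase three, given $\ball(Q)$ already in hand, materialises the edges of $\pit(G,k)$ by inspecting pairs of discovered configurations and checking the colosseum's edge conditions, which comfortably fits inside the claimed budget.

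For phase two I would prove that the algorithm discovers exactly $\ball(Q)$. Soundness holds invariantly because a configuration is added to the discovered set $B$ only when either some existential neighbour is already in $B$, or the \emph{full} universal neighbourhood~--~that is, the connected components of $G[C]$~--~lies in $B$; both cases immediately place the configuration in $\ball(Q)$. Completeness is shown by induction on the edge-alternating distance from $C\in\ball(Q)$ to $Q$. The existential case is immediate because the algorithm enumerates, for every discovered $C'\in B$, the at most $|V|$ candidate predecessors $C'\cup\{v\}$. The universal case is where Lemma~\ref{lemma:universal_consistent} does the essential work: when $C$ has universal neighbours $\{C_1,\dots,C_\ell\}\subseteq\ball(Q)$, universal consistency guarantees that for every $I$ with $|I|\ge 2$ there is some $C_I\in\ball(Q)$ realising exactly $I$ as its universal neighbourhood. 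Iterating on $|I|$, the pair-gluing step produces $C_{\{1,2\}}$ first, then $C_{\{1,2,3\}}$, and so on, each strictly closer to $Q$ and hence discovered earlier by induction, until $C=C_{\{1,\dots,\ell\}}$ itself is reached.

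The main obstacle I anticipate is handling the gluing check correctly and efficiently at the same time. When the algorithm considers $C=C_a\cup C_b$ for two configurations $C_a,C_b\in B$, it must both verify that $C$ is a colosseum vertex ($|N_G(C)|\le k$) and that \emph{every} connected component of $G[C]$ already lies in $B$~--~not merely $C_a$ and $C_b$ themselves, because either piece might decompose further inside $G[C]$ or already be a multi-component aggregate. A straightforward BFS-based implementation makes this check cost $O(|V|^2)$ per pair. Summing over the at most $O(|\ball(Q)|^2)$ gluing pairs in phase two and adding the $O(|\ball(Q)|\cdot|V|^2)$ cost of existential-predecessor enumeration yields the advertised $O(|\ball(Q)|^2\cdot|V|^2)$ total runtime, with phase three absorbed into the same bound.
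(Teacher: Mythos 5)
Your soundness argument analyses a different algorithm than the one the theorem is about. In Listing~\ref{alg:pid:alternating}, the reverse reveal-move in Line~\ref{discover:R} calls $\algop{insert}(C\cup C',k)$ for \emph{every} already discovered $C'$ that is merely disjoint from $C$, and $\algop{insert}$ only tests novelty and $|N_G(C\cup C')|\leq k$; there is no test that the connected components of $G[C\cup C']$ have been discovered. In particular the two glued pieces may be adjacent in $G$, so the components of the union need not be the components of the pieces~--~the union may even be connected. Hence your stated invariant (``a configuration is added only when some existential neighbour is in $B$ or the full universal neighbourhood lies in $B$'') is simply false for $\algop{Discover}$ as written, and your later remark that the algorithm \emph{must} verify that every component of $G[C]$ already lies in $B$ replaces the algorithm rather than analysing it. The crux of the paper's proof of the inclusion $K\subseteq\ball(Q)$ is exactly the step you skip: one has to show that the unrestricted gluing of two discovered disjoint members of $\ball(Q)$ with $|N_G(C\cup C')|\leq k$ again yields a member of $\ball(Q)$, which the paper derives from the universal consistency of the colosseum (Lemma~\ref{lemma:universal_consistent}). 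Your proposal would need either that argument, or a proof that adding your extra component check does not change the computed vertex set and still meets the time bound; neither is given. (For what it is worth, your modified procedure remains complete, since in the completeness argument the glued pieces are components of one disconnected configuration and hence pairwise non-adjacent, but correctness of a modified procedure does not establish the theorem about $\algop{Discover}$.)

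A second, smaller flaw sits in your completeness argument. The intermediate aggregates $C_{\{1,2\}}, C_{\{1,2,3\}},\dots$ are in general \emph{not} strictly closer to $Q$ than $C$: their distance is the maximum of the distances of the constituent components plus the universal weight, which can be equal to $d(C,Q)$, so they are not covered by an induction hypothesis on distance. The correct reason they enter the discovered set is the algorithm itself: once the already discovered parts are extracted from the queue, Line~\ref{discover:R} inserts their pairwise unions (the neighbourhood test passes because $N_G\bigl(\bigcup_{i\in I}t_i\bigr)\subseteq N_G(C)$), and iterating this process eventually inserts $C$~--~this is how the paper argues, pairing the components up and gluing the pairs once they are processed; that the intermediate unions also lie in $\ball(Q)$ is again supplied by universal consistency and is needed for soundness, not for completeness. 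Your treatment of Phases~I and~III and the runtime accounting are fine and match the paper.
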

\begin{proof}
    The algorithm is supposed to compute $Q$ in phase I, $\ball(Q)$ in
  phase II, and the edges of $\colosseum(G,k)[\ball(Q)]$ in phase
  III. First observe that $Q$ is correctly computed in phase I by the
  definition of $Q$.

  To show the correctness of the second phase we argue that the computed set
  $V(\pit(G,k))$ equals $\ball(Q)$. Let us refer to the set
  $V(\pit(G,k))$ during the computation as $K$ and 
  observe that this is exactly the set of vertices
  inserted into the queue. We first show $K\subseteq \ball(Q)$ by induction over the $i$th
  inserted vertex. The first vertex $C_1$ is in $\ball(Q)$ as $C_1\in
  Q$. Now consider $C_i$. As $C_i\in K$, it was either added in
  Line~\ref{discover:F} or Line~\ref{discover:R}. In the first case
  there was a vertex $\tilde C_i\in K$ such that $C_i=\tilde
  C_i\cup\{v\}$ for some $v\in N(\tilde C_i)$. By the induction
  hypothesis we have $\tilde C_i\in\ball(Q)$ and by the definition of
  the colosseum we have $(C_i,\tilde C_i)\in E(H)$ and, thus,
  $C_i\in\ball(Q)$. In the second case there where vertices $\tilde
  C_i$ and $\hat C_i$ with $\tilde C_i,\hat C_i\in K$ and $C_i=\tilde
  C_i\cup\hat C_i$. By the induction hypothesis we have again $\tilde
  C_i,\hat C_i\in\ball(Q)$. Let $t_1,\dots, t_{\ell}$ be the connected components of $\tilde C_i$ and
  $\hat C_i$. Since the colosseum $H$ is universal consistent with
  respect to $Q$ by
  Lemma~\ref{lemma:universal_consistent}, we have $t_1,\dots,
  t_{\ell}\in\ball(Q)$. By the definition of the colosseum we have
  $N_{\forall}(C_i)=t_1,\dots,t_{\ell}$ and, thus, $C_i\in\ball(Q)$.

  To see $\ball(Q)\subseteq K$ consider for a contradiction the vertices of $\ball(Q)$ in
  reversed topological order (recall that $H$ is acyclic) and let $C$
  be the first vertex in this order with $C\in\ball(Q)$ and $C\not\in K$.
  If $C\in Q$ we have $C\in K$ by phase I and are done, so assume
  otherwise. Since $C\in\ball(Q)$ we have either
  $N_{\exists}(C)\cap\ball(Q)\neq\emptyset$ or $\emptyset\neq N_{\forall}(C)\subseteq\ball(Q)$.
  In the first case there is a $\tilde C\in\ball(Q)$ with $(C,\tilde
  C)\in E(H)$. Therefore, $\tilde C$ precedes $C$ in the reversed
  topological order and, by the choice of $C$, we have $\tilde C\in
  K$. Therefore, at some point of the algorithm $\tilde C$ gets
  extracted from the queue and, in Line~\ref{discover:F}, would add
  $C$ to $K$, a contradiction.

  In the second case there are vertices $t_1,\dots,t_{\ell}$ with
  $N_{\forall}(C)=\{t_1,\dots,t_{\ell}\}$ and
  $t_1,\dots,t_{\ell}\in\ball(Q)$. By the choice of $C$, we have again
  $t_1,\dots,t_{\ell}\in K$. Since $H$ is universal consistent with
  respect to $Q$, we have for every $I\subseteq\{1,\dots,\ell\}$ that
  $\bigcup_{i\in I}t_i$ is contained in $\ball(Q)$. In particular, the
  vertices $t_1\cup t_2$, $t_3\cup t_4$, $\dots$, $t_{\ell-1}\cup t_{\ell}$ are contained in
  $\ball(Q)$, and these elements are added to $K$ whenever the $t_i$ are
  processed (for simplicity assume here that $\ell$ is a power of 2).
  Once these elements are processed, Line~\ref{discover:R}
  will also add their union, that is, vertices of the form $(t_1\cup
  t_2)\cup(t_3\cup t_4)$. In this way, the process will add vertices
  that correspond to increasing subgraphs of $G$ to $K$, resulting 
  ultimately in adding $\bigcup_{i=1}^{\ell}t_i=C$ into $K$, which is the
  contradiction we have been looking for.

  Finally, once the set $\ball(Q)$ is known, it is easy to compute the
  subgraph $\colosseum(G,k)[\ball(Q)]$, that is, to compute the edges
  of the subgraph induced by $\ball(Q)$. Phase III essentially
  iterates over all vertices and adds edges according to the
  definition of the colosseum.
  
  For the runtime, observe that the queue will contain exactly the set
  $\ball(Q)$ and, for every element extracted, we search through the
  current $K'\subseteq\ball(Q)$, which leads to the quadratic
  timebound of $|\ball(Q)|^2$. Furthermore, we have to compute the
  neighborhood of every extracted element, and we have to test whether two
  such configurations intersect~--~both can easily be achieved in time
  $O(|V|^2)$. Finally, in phase~III we have to compute connected
  components of the elements in $\ball(Q)$, but since this is possible
  in time $O(|V|+|E|)$ per element, it is clearly possible in time
  $|\ball(Q)|\cdot|V|^2$ for the whole graph.
\end{proof}

\begin{minipage}[t]{.51\linewidth}
  \begin{lstlisting}[caption={$\algop{Discover}(G,k)$},
    numbers=left, label=alg:pid:alternating]
$V(\pit(G,k)) := \emptyset$
$E(\pit(G,k)) := \emptyset$
$A(\pit(G,k)) := \emptyset$
initialize empty queue

// Phase I: compute $Q$
for $v\in V(G)$ do
  $\algop{insert}(\{v\}, k-1)$
end
  
// Phase II: compute $\ball(Q)=V(\pit(G,k))$
while queue not empty do 
  extract $C$ from queue
  // reverse fly-moves
  for $v\in N(C)$ do
    $\algop{insert}(C\cup\{v\}, k-1)$$\label{discover:F}$
  end
  // reverse reveal-moves
  for $C'\in V(\pit(G,k))$ with $C\cap C'=\emptyset$ do
    $\algop{insert}(C\cup C', k)$$\label{discover:R}$
  end  
end
  
// Phase III: compute $E$ and $A$
discoverEdges()

return $\big(V(\pit(G,k)), E(\pit(G,k)), A(\pit(G,k))\big)$
\end{lstlisting}
\end{minipage}\hspace{0.75cm}
\begin{minipage}[t]{.45\linewidth}
\begin{lstlisting}[caption={$\algop{insert}(C, t)$}, numbers=right]
if $C\not\in V(\pit(G,k))$ and $|N_G(C)|\leq t$ then
  add $C$ to $V(\pit(G,k))$
  insert $C$ into queue
end
\end{lstlisting}\vspace{5ex}
\begin{lstlisting}[caption={$\algop{discoverEdges}()$}, numbers=right]
for $C\in V(\pit(G,k))$ do

  // add fly-move edges
  for $v\in C$ do
    if $C\setminus\{v\}\in V(\pit(G,k))$ then
      add $(C, C\setminus\{v\})$ to $E(\pit(G,k))$
    end
  end

  // add reveal-move edges
  let $C_1,\dots,C_{\ell}$ be
    the connected components of $G[C]$
  if $C_1,\dots,C_{\ell}\in K$ then
    for $i=1$ to $\ell$ do
      add $(C, C_i)$ to $A(\pit(G,k))$
    end
  end
    
end
\end{lstlisting}
\end{minipage}

\section{Distance Queries in Edge-Alternating Graphs}\label{section:labeling}
In the previous section we have discussed how to compute the pit for a given graph
and a given value $k$. The computation of treewidth now boils down to a
reachability problem within this pit. But, intuitively, the pit should be able
to give us much more information. In the present section we formalize this
claim: We will show that we can compute shortest edge-alternating paths. To get
an intuition of ``distance'' in edge-alternating graphs think about such a graph
as in our game and consider some vertex $v$. There is always one active player
that may decide to take \emph{one} existential edge (a fly-move in our game), or
the player may decide to ask the opponent to make a move and, thus, has to
handle \emph{all} universal edges (a reveal-move in our game). From the point of
view of the active player, the distance is thus the \emph{minimum}
over the minimum of the
distances of the existential edges and the maximum of the universal edges.
\begin{definition}[Edge-Alternating Distance]
  Let $H=(V,E,A)$ be an edge-alternating graph with $v\in V$ and
  $Q\subseteq V$, let further $c_0\in\mathbb{N}$ be a constant and
  $\omega_E\colon E\rightarrow\mathbb{N}$ and $\omega_A\colon
  A\rightarrow\mathbb{N}$ be weight functions. The
  distance $d(v,Q)$ from $v$ to $Q$ is inductively defined as
  $d(v,Q)=c_0$ for $v\in Q$ and otherwise:
  \[
    d(v,Q)=
      \min\big(\,
      \min\limits_{w\in N_{\exists}(v)}(d(w,Q)+\omega_E(v,w)),\,
      \max\limits_{w\in N_{\forall}(v)}(d(w,Q)+\omega_A(v,w))\,
      \,\big).
  \]
\end{definition}

\begin{lemma}\label{lemma:distance}
  Given an acyclic edge-alternating graph $H=(V,E,A)$, weight
  functions $\omega_E\colon E\rightarrow\mathbb{N}$ and
  $\omega_A\colon A\rightarrow\mathbb{N}$, a source vertex $s\in V$,
  a subset of the sinks $Q$, and a constant $c_0\in\mathbb{N}$. The value $d(s,Q)$ can be
  computed in time $O(|V|+|E|+|A|)$ and a corresponding edge-alternating path can be computed in the same time.
\end{lemma}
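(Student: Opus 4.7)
The plan is a reverse-topological dynamic program exploiting the fact that $H$ is acyclic when $E\cup A$ is viewed as one set of directed edges. First, I would compute a topological ordering of $H$ in time $O(|V|+|E|+|A|)$, which is standard for acyclic digraphs.

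I would then process vertices in reverse topological order, maintaining for each $v$ a value $d[v]$ and a \emph{witness} record. When $v$ is processed, all of its successors via $E$ or $A$ have already been handled, so I can compute
\[
  d_E(v) = \min_{w \in N_{\exists}(v)}\bigl(d[w] + \omega_E(v,w)\bigr),\qquad
  d_A(v) = \max_{w \in N_{\forall}(v)}\bigl(d[w] + \omega_A(v,w)\bigr),
\]
with the convention that both $\min$ and $\max$ over the empty set return $+\infty$; the latter convention is the one bookkeeping subtlety, because an empty universal neighborhood cannot satisfy the path condition. For $v\in Q$ I set $d[v]:=c_0$; otherwise $d[v]:=\min(d_E(v),d_A(v))$. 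The witness for $v$ records either the existential edge to a $w$ achieving $d_E(v)$ (if $d_E(v)\le d_A(v)$) or the entire set $N_{\forall}(v)$ otherwise. A straightforward induction along the reverse topological order shows $d[v]=d(v,Q)$ by virtue of the defining recursion, and since each edge is inspected at most once, the running time is $O(|V|+|E|+|A|)$.

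For the path, I would start with $P:=\{s\}$ and follow witnesses in a DFS/BFS: when a vertex $v\in P\setminus Q$ is expanded, if its witness is an existential edge to $w$ I add $w$ to $P$, while if its witness is the universal set $N_{\forall}(v)$ I add every element of $N_{\forall}(v)$ to $P$; the expansion terminates at vertices of $Q$. Each vertex enters $P$ at most once and the work at each vertex is proportional to the size of its witness, so the reconstruction also runs in $O(|V|+|E|+|A|)$. By construction every $v\in P\setminus Q$ satisfies one of the two clauses of the edge-alternating path definition, and the total weight of $P$ (read off along the same recursion) equals $d(s,Q)$.

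No step poses a real difficulty; the main subtleties I would be careful about are the correct convention for empty universal neighborhoods, gracefully handling $d(s,Q)=+\infty$ (reported as no path existing), and checking that the reconstructed set $P$ really satisfies the defining conditions of an edge-alternating path — all of which follow immediately from how the witnesses are set up.
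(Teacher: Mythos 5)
Your proposal is correct and follows essentially the same route as the paper's proof: compute a topological order of the acyclic graph $H$, process vertices in reverse order setting $d(v,Q)=c_0$ or $\infty$ at sinks and otherwise evaluating the defining $\min/\max$ recursion, then recover the path by backtracking the stored labels (your ``witnesses''), all in time $O(|V|+|E|+|A|)$. Your extra care about empty universal neighborhoods and the witness-based reconstruction simply spells out details the paper leaves implicit.
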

\begin{proof}
  Since $H$ is acyclic we can compute a topological order of $V$ using the
  algorithm from~\cite{Kahn62}. We iterate over the vertices $v$ in reversed
  order and compute the distance as follows: if $v$ is a sink we either set
  $d(v,Q)=c_0$ or $d(v,Q)=\infty$, depending on whether we have $v\in Q$. If $v$
  is not a sink we have already computed $d(w,Q)$ for all $w\in N(v)$ and,
  hence, can compute $d(v,Q)$ by the formula of the definition. Since this
  algorithm has to consider every edge once, the whole algorithm runs in time
  $O(|V|+|E|+|A|)$. A path from $s$ to $Q$ of length $d(s,Q)$ can be found by
  backtracking the labels starting at $s$.
\end{proof}

\begin{theorem}\label{theorem:parameters}
  Given a graph $G=(V,E)$ and a number $k\in\mathbb{N}$, we can decide in time $O(|\pit(G,k+1)|^2\cdot|V|^2)$
  whether $G$ has $\{$ treewidth, pathwidth, treedepth,
  $q$-branched-treewidth, dependency-treewidth $\}$ at most $k$.
\end{theorem}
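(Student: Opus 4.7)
For each of the five parameters, I would reduce the decision problem to an edge-alternating reachability or shortest-path query on $\pit(G, k+1)$, and then invoke Theorem~\ref{theorem:pid} to compute the pit in $O(|\pit(G,k+1)|^2 \cdot |V|^2)$ time and Lemma~\ref{lemma:distance} to answer the query in subsequent time linear in the size of the pit. Thus the $\algop{Discover}$-step always dominates the running time and matches the claimed bound.

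The reductions I would use are the following. For treewidth, the classical correspondence between tree decompositions and graph searching gives $\tw(G) \leq k$ iff $V \in \ball(Q)$ in $\colosseum(G,k+1)$, which after $\algop{Discover}$ reduces to checking $V \in V(\pit(G,k+1))$. For pathwidth and, more generally, $q$-branched-treewidth, I would use that a $q$-branched tree decomposition of width at most $k$ corresponds to a winning strategy whose configuration tree uses at most $q$ reveal-moves on any root-to-sink branch; since reveal-moves are precisely the universal edges of the pit, Lemma~\ref{lemma:distance} with $\omega_E \equiv 0$, $\omega_A \equiv 1$, and $c_0 = 0$ yields $\twq(G) \leq k$ iff $d(V,Q) \leq q$. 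The case $q = 0$ recovers pathwidth, as $\pw = \tw_0$ by definition. For treedepth, the depth of a treedepth decomposition equals the number of placements (fly-moves) along the deepest root-to-leaf branch of the search strategy, so Lemma~\ref{lemma:distance} with $\omega_E \equiv 1$, $\omega_A \equiv 0$, and a suitable constant $c_0 \in \{0,1\}$ returns $\td(G) = d(V,Q)$, and treedepth $\leq k$ is decided by comparison.

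For dependency-treewidth I would run $\algop{Discover}$ on a pruned version of the pit in which the existential edge $(C, C \setminus \{v\})$ is deleted whenever placing $v$ in configuration $C$ would violate the order $\lessdot$. The crucial observation is that $F_u(T)$, the topmost bag containing $u$, corresponds exactly to the fly-move that first places $u$, so the nonlocal-looking condition ``$F_u(T) <_T F_v(T)$ implies $\lnot(u \lessdot v)$'' translates into a local constraint on which vertex may be placed in the current configuration. On the pruned pit, both $\algop{Discover}$ and Lemma~\ref{lemma:distance} still apply unchanged, and $\dtw(G) \leq k$ reduces once more to checking $V \in \ball(Q)$.

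The main obstacle is the last case: justifying that the order-of-appearance constraint really is capturable by such a local pruning requires a careful correspondence between the ancestor relation $<_T$ of the decomposition tree and the order of fly-moves along a winning strategy in the pit, and one must verify that the pruning preserves universal consistency so that Lemma~\ref{lemma:universal_consistent} still applies. Once that correspondence is pinned down, the remainder of the theorem is a routine matter of plugging the appropriate weights into Lemma~\ref{lemma:distance} and quoting Theorem~\ref{theorem:pid} for the dominant cost; the distance queries themselves add no further asymptotic factor.
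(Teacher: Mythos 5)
Your reductions for treewidth, pathwidth, treedepth and $q$-branched treewidth are essentially the paper's: compute $\pit(G,k+1)$ via Theorem~\ref{theorem:pid}, then answer edge-alternating distance queries with Lemma~\ref{lemma:distance}, using $\omega_E\equiv 0$, $\omega_A\equiv 1$, $c_0=0$ for $\twq$ (treewidth being $\tw_\infty$ and pathwidth $\tw_0$), and $\omega_E\equiv 1$, $\omega_A\equiv 0$ with $c_0=1$ (not $0$) for treedepth. Be aware, though, that the bulk of the paper's proof is exactly the correspondence you take as given: it proves in detail, following Fomin, Fraigniaud and Nisse, that a $q$-branched tree decomposition of width $k$ yields a $q$-branched edge-alternating path from $V$ to $Q$ in $\pit(G,k+1)$ and conversely (using the separator property of adjacent bags for one direction and induction on $q$ for the other), and for treedepth it verifies $d(C,Q)=\td^*(C)$ against the recursive characterization of treedepth. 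A complete write-up would have to supply these arguments rather than cite them as classical, since they must be checked for this particular configuration graph with implicit searcher removal.

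The genuine gap is in your treatment of dependency-treewidth. You propose to delete the forbidden existential edges and run $\algop{Discover}$ on the pruned structure, and you correctly flag the obstacle yourself: the correctness proof of Theorem~\ref{theorem:pid} leans on universal consistency (Lemma~\ref{lemma:universal_consistent}) of the colosseum, and after pruning neither this property nor the reverse-move enumeration that $\algop{Discover}$ performs is guaranteed to survive, so your plan is incomplete precisely at its load-bearing step. The paper avoids this entirely: it computes the \emph{unrestricted} pit once, observes that any winning strategy of the restricted game is in particular a winning strategy of the unrestricted game and hence corresponds to an edge-alternating path inside $\ball(Q)$, and then encodes the restrictions purely in the weight functions, assigning weight $\infty$ to forbidden fly-move edges and to all edges leaving configurations where reveals are banned, and finally searching for a path of weight less than $\infty$ with Lemma~\ref{lemma:distance}. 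This keeps $\algop{Discover}$ and its correctness proof untouched and requires no statement about pruned graphs; you should either replace your pruning step by this weighting argument or actually prove that the pruned structure still supports the $\algop{Discover}$ correctness proof, which is exactly the work you have deferred.
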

Before we got into the details, let us briefly sketch the general idea
of proving the theorem: All five problems have game theoretic characterizations in terms of
  the same search game with the same configuration set~\cite{BienstockS1991,FominFN09,GiannopoulouHT12}. More
  precisely, they condense to various distance questions within the
  colosseum by assigning appropriate weights to the edges. 
  \begin{compactdesc}
  \item[treewidth:] To solve treewidth, it is sufficient to find \emph{any}
    edge-alternating path from the vertex $C_s=V(G)$ to a vertex in
    $Q$. We can find a path by choosing
    $\omega_{E}$ and $\omega_{A}$ as $(x,y)\mapsto 0$, and by setting $c_0=0$.
  \item[pathwidth:] In the pathwidth game, the searchers are not allowed to
    perform any reveal~\cite{BienstockS1991}. Hence, universal edges cannot
    be used and we set $\omega_{A}$ to $(x,y)\mapsto\infty$. By setting
    $\omega_{E}$ to $(x,y)\mapsto 0$ and $c_0=0$, we again only need to find some path from $V(G)$ to
    $Q$ with weight less than $\infty$.
  \item[treedepth:] In the game for treedepth, the searchers are not allowed to
    remove a placed searcher again~\cite{GiannopoulouHT12}. Hence, the searchers
    can only use $k$ existential edges. Choosing $\omega_{E}$ as $(x,y)\mapsto 1$,
    $\omega_{A}$ as $(x,y)\mapsto 0$, and $c_0=1$ is sufficient. We
    have to search a path of weight at most $k$.
  \item[$q$-branched-treewidth:] For q-branched-treewidth we wish to
    use at most $q$ reveals~\cite{FominFN09}. By choosing $\omega_{E}$
    as $(x,y)\mapsto 0$,
    $\omega_{A}$ as $(x,y)\mapsto 1$, and $c_0=0$, we have to search
    for a path of weight at most $q$.
  \item[dependency-treewidth] This parameter is in essence defined via
    graph searching game that is equal to the game we study with
    some fly- and reveal-moves forbidden. Forbidding a move can be
    archived by setting the weight of the corresponding edge to
    $\infty$ and by searching for an edge-alternating path of weight
    less then $\infty$.
  \end{compactdesc}

\begin{proof}
Let us first observe that, by the definition of the colosseum, $k$
searchers in the search game have a winning strategy if, and only if,
the start configuration $V(G)$ is contained in $\ball(Q)$. With other
words, if there is an edge-alternating path from $V(G)$ to some
winning configuration in $Q$. Note that such a path directly
corresponds to the strategy by the searchers in the sense that the
used edges directly correspond to possible actions of the searchers.

Since for any graph $G=(V,E)$ and any number $k\in\mathbb{N}$ the
edge-alternating graph $\colosseum(G,k)$ is universal consistent by
Lemma~\ref{lemma:universal_consistent}, all vertices of an edge
alternating path corresponding to a winning strategy are contained in
$\ball(Q)$ as well. In fact, \emph{every} edge-alternating path from
$V(G)$ to $Q$ (and, thus, \emph{any} winning strategy) is completely
contained in $\ball(Q)$. Therefore, it will always be sufficient to
search such paths within $\pit(G,k)$.
By Lemma~\ref{lemma:distance} we can find such a path in time
$O(|\pit(G,k)|^2)$. In fact, we can even define two weight functions
$w_E\colon E\rightarrow\mathbb{N}$ and $w_A\colon
A\rightarrow\mathbb{N}$ and search a \emph{shortest path} from $V(G)$
to $Q$. To compute the invariants of $G$ as stated in the theorem, we
make the following claim:

\begin{claim}
  Let $G=(V,E)$ be a graph and $k\in\mathbb{N}$. Define $w_E$ as $(x,y)\mapsto 0$
  and $w_A$ as $(x,y)\mapsto 1$, and set $c_0=0$. Then we have $d(V(G),Q)\leq q$ in $\pit(G,k)$
  if, and only if, $\twq(G)\leq k-1$.
\end{claim}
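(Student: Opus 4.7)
The plan is to transfer the question from the pit back to the graph-searching game, invoke the folklore correspondence between winning strategies for $k$ searchers and tree decompositions of width at most $k-1$ under which reveal-moves correspond exactly to branching nodes of the decomposition, and then observe that with $\omega_E=0$, $\omega_A=1$, and $c_0=0$ the inductive definition of $d(V(G),Q)$ counts precisely the maximum number of reveal-moves along any root-to-leaf play. As noted in the proof of Theorem~\ref{theorem:parameters}, every edge-alternating $V(G)$-to-$Q$ path in $\colosseum(G,k)$ already lies in $\pit(G,k)$ by Lemma~\ref{lemma:universal_consistent}, so restricting attention to the pit is free.

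For the forward direction, given a rooted $q$-branched tree decomposition $(T,\iota)$ of width at most $k-1$, I traverse $T$ from the root down and build an edge-alternating path $P$ in $\pit(G,k)$: the configuration at a node $t$ is the set of vertices appearing in the subtree rooted at $t$ but not in $\iota(t)$, so the searchers occupy a subset of $\iota(t)$ and hence $|N_G(C)|\leq k$. Non-branching edges of $T$ translate to fly-moves (existential edges of $P$), while a branching node of $T$ with children $t_1,\dots,t_\ell$ yields a reveal-move whose $\ell$ universal successors are the configurations associated with each subtree. Since $(T,\iota)$ is $q$-branched, every classical directed path inside $P$ uses at most $q$ universal edges, and the inductive definition of $d$ then gives $d(V(G),Q)\leq q$. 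For the converse, an edge-alternating path $P$ realizing $d(V(G),Q)\leq q$ is unrolled into a strategy tree by taking a minimizing existential successor at each existential vertex and all universal successors at each universal vertex. Applying the same correspondence in reverse produces a tree decomposition $(T,\iota)$ of width at most $k-1$ whose branching nodes are in bijection with the universal vertices appearing in $P$; the condition $d(V(G),Q)\leq q$ therefore translates directly into the $q$-branched property of $(T,\iota)$, yielding $\tw_q(G)\leq k-1$.

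The main obstacle is pinning down the strategy-to-decomposition correspondence precisely enough that branching is preserved, i.e.\ verifying that a fly-move merely introduces or forgets a single vertex (and thus never creates a branching node in $(T,\iota)$), while a reveal-move splitting $C$ into components $C_1,\dots,C_\ell$ is the sole source of a node with $\ell\geq 2$ children. This equivalence is classical~\cite{BienstockS1991,FominFN09,GiannopoulouHT12}, but minor care is needed at the boundaries: the leaves of $T$ correspond to winning configurations in $Q$ (single vertices with neighborhood of size less than $k$, contributing $c_0=0$ to the distance), while the root corresponds to the start configuration $V(G)$, reached after an initial sequence of fly-moves populating its bag. Once these endpoints are handled, the equivalence of $d(V(G),Q)\leq q$ and $\tw_q(G)\leq k-1$ is mechanical.
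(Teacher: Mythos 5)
Your overall plan coincides with the paper's: translate between $q$-branched tree decompositions and edge-alternating paths in the pit, let universal edges count reveals so that the weights $\omega_E=0$, $\omega_A=1$, $c_0=0$ measure exactly the branching number, and use universal consistency (Lemma~\ref{lemma:universal_consistent}) to argue that working inside $\pit(G,k)$ instead of $\colosseum(G,k)$ is free. The difficulty is that the two substantive steps are asserted rather than argued, and one of them is a real gap, not a routine check. In the decomposition-to-path direction you say that non-branching edges of $T$ ``translate to fly-moves'' and you only verify that the configurations attached to the \emph{nodes} of $T$ have neighbourhood of size at most $k$. But an existential edge of the colosseum removes a \emph{single} vertex from the contaminated set, so one tree edge corresponds to a whole sequence of fly-moves, and every intermediate configuration on that sequence must itself be a vertex of $\colosseum(G,k)$, with its predecessor additionally satisfying $|N_G(\cdot)|<k$. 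Placing a searcher can enlarge the neighbourhood by one, so an arbitrary removal order can push it beyond $k$; this is precisely why the paper fixes a specific order (first eliminate the contaminated vertices adjacent to $\iota(i)\setminus\iota(j)$, only then the contaminated part of $\iota(j)$) and invokes the separator property of $\iota(i)\cap\iota(j)$ (Fact~\ref{lemma:separator}) to prove that all intermediate configurations exist. Your sketch never addresses this, and it is exactly the point where the width bound could fail. A related inaccuracy: at a branching node the universal successors in the colosseum are the connected components of the contaminated set, not ``one configuration per child''; one still has to show that each component meets the subtree of exactly one child (and deal with degenerate children whose bag equals $\iota(i)$), as the paper does before following the universal edges. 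Appealing to the classical game characterizations of~\cite{BienstockS1991,FominFN09,GiannopoulouHT12} does not discharge this, because those results concern the search game itself, whereas the claim is about the specific configuration graph with one-vertex fly-moves, implicit removal and all-components reveals, and the simulation of a decomposition within these restricted moves is the content to be proved.

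The converse direction has the same character: ``apply the same correspondence in reverse'' hides the actual construction. The paper builds the bags explicitly as $\iota(t_i)=N(\pi_i)\cup\{v_i\}$ along a directed existential path (which is what yields width at most $k-1$, since an outgoing existential edge forces $|N_G(\pi_i)|\le k-1$), verifies the three tree-decomposition axioms from the ``removed at time $i$'' bookkeeping, and then handles universal vertices by induction on $q$, splitting the path at a top-level universal configuration and attaching the recursively obtained sub-decompositions below the last bag of the shared existential prefix. Your proposal names the right correspondence and correctly identifies that branching nodes must come only from reveal-moves, but without the interpolation argument in the forward direction and the explicit bag construction plus induction in the backward direction, the proof is not yet complete.
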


\begin{proof}
We follow the proof of Theorem 1
in \cite{FominFN09} closely. We will use the following well-known fact that
easily follows from the observation that in a tree decomposition $(T,\iota)$,
for each three different nodes $i_{1},i_{2},i_{3}\in T$, we have
$\iota(i_{1})\cap \iota(i_{3})\subseteq \iota(i_{2})$ if $i_{2}$ is on the
unique path from $i_{1}$ to $i_{3}$ in $T$.
\begin{fact}\label{lemma:separator}
  Let $(T,\iota)$ be a tree decomposition of $G=(V,E)$ rooted arbitrarily at
  some node $r\in T$. Let $i\in T$ be a node
  and $j\in T$ be a child of $i$ in $T$. Then, the set $\iota(i)\cap \iota(j)$
  is a separator between $C=\bigl[ \bigcup_{d\in \Desc(j)}\iota(d)
  \bigr]\setminus \big(\iota(i)\cap\iota(j)\big)$ and $\big(V\setminus C\big)\setminus \big(\iota(i)\cap\iota(j)\big)$, where $\Desc(x)$
  denotes the set of descendants of $x$ including $x$. Hence, every path from some node
  $u\in C$ to some node $v\in V\setminus C$ contains a vertex of $\iota(i)\cap
  \iota(j)$. 
\end{fact}

\paragraph{From a Tree Decomposition to an Edge-alternating Path:} Let $(T,\iota)$ be a $q$-branched tree decomposition of $G=(V,E)$ of width $k$.
Without loss of generality, we can assume that $G$ is connected. We will
show how to construct an edge-alternating path from the start configuration
$V$ of cost at most $q$ in $\colosseum(G,k+1)$. As described above, this is also
an edge-alternating path with the same costs in $\pit(G,k+1)$. 
The first existential edge from $V$ leads to the configuration $V\setminus \iota(r)$, where
$r$ is the root of $T$. Clearly, $N(V\setminus \iota(r))\subseteq \iota(r)$.  Now suppose that we have reached a configuration
$C$ with $N(C)\subseteq \iota(i)\in V(\colosseum(G,k+1))$ for some node $i\in T$ and we have
\begin{align*}
 C\subseteq \bigl[ \bigcup_{j\in \Desc(i)}\iota(j) \bigr]\setminus \iota(i) 
\end{align*}
where $\Desc(i)$ are the descendants of $i$ in $T$. Clearly, for $i=r$, this
assumption holds trivially. If $i$ is a leaf in $T$, there are no more
descendants and thus $C=\emptyset$. Hence, have reached a winning configuration
in $\colosseum(G,k+1)$. Therefore, suppose that $i$ is a
non-leaf node. We distinguish two cases:
\begin{itemize}
\item If $i$ has exactly one child $j$, we can find a path $P_{1}$ of
  existential edges leading from $C$ to a configuration $C_{1}$ with
  $N(C_{1})\subseteq \iota(i)\cap \iota(j)$. Moreover, we can also find a path $P_{2}$ of
  existential edges from $C_{1}$ to a configuration $C_{2}$ with
  $N(C_{2})\subseteq \iota(j)$.

  The path $P_{1}$ will be constructed by iteratively removing all vertices
  $v\in C$ with $N(v)\cap [\iota(i)\setminus \iota(j)]\neq \emptyset$. For the
  remaining vertices $C_{1}$, we have $N(C_{1})\subseteq \iota(i)\cap \iota(j)$.
  Clearly, if all configurations that we aim to visit on $P_{1}$ exists, the
  corresponding edges also exists by definition. Hence, assume that we are
  currently in some configuration $C'$ with $N(C')\cap [\iota(i)\setminus
  \iota(j)]\neq \emptyset$ and want to remove some vertex $v\in C'$ with
  $N(v)\cap [\iota(i)\setminus \iota(j)]\neq \emptyset$, but $C'\setminus
  \{v\}\not\in V(\colosseum(G,k+1))$. By definition of $\colosseum(G,k+1)$, this
  means that $|N(C'\setminus \{v\})|\geq k+2$. As we wanted to remove $v$, we
  have $N(v)\cap \iota(i)\neq \emptyset$. On the other hand, as $N(C'\setminus
  \{v\})\subseteq N(C')\cup \{v\}$ and $|N(C'\setminus \{v\})|\geq k+2$, we know
  that there is some $u\in C'$ with $v\in N(u)$. Hence,
  Fact~\ref{lemma:separator} implies that $v\in \iota(i)\cap \iota(j)$, a
  contradiction. Hence, all configurations in $P_{1}$ exist. 

  Similarly, we construct $P_{2}$ by iteratively removing all vertices in
  $\iota(j)$ from  $C_{1}$. It is easy to see that the neighborhood of the
  visited configurations will always be a subset of $\iota(j)$ and hence, all
  configurations on this path exist. 

We have thus arrived at a configuration $C_{2}$ with $N(C_{2})\subseteq
\iota(j)$ and
\begin{align*}
C_{2}\subseteq  \bigl[ \bigcup_{j'\in \Desc(j)}\iota(j') \bigr]\setminus \iota(j)
\end{align*}
due to Fact~\ref{lemma:separator}. 

\item If node $i$ has a set of children $J$ with $|J|\geq 2$, we will use
  universal edges.  Let $\mathcal{C}$ be
  the connected components of $G[\bigcup_{j\in \Desc(i)}
  \iota(j)\setminus\iota(i)]$.  
  We claim, that for each component $\Gamma\in \mathcal{C}$, there is a
  unique index $j(\Gamma)\in J$ such that $\Gamma\cap \iota(j(\Gamma))\neq \emptyset$. If no such
  index exists, we have $\iota(j)=\iota(i)$. We can iteratively remove such bags
  $\iota(j)$ until this can not happen anymore. If two indices $j_{1},j_{2}\in
  J$ exist with $\iota(j_{1})\cap \Gamma\neq \emptyset$ and $\iota(j_{2})\cap \Gamma\neq
  \emptyset$, the connectivity property implies that $\iota(i)\cap \Gamma\neq
  \emptyset$, a contradiction to our assumption. Hence, for each component $\Gamma$,
  we follow the universal edge to $\Gamma$ and then proceed as above: first, we find
  a path $P_{1}$ of existential edges from  $\Gamma$ to a configuration
  $\Gamma_{1}$ with $N(\Gamma_{1})\subseteq \iota(i)\cap \iota(j(\Gamma))$ and
  then a path $P_{2}$ of existential edges from $\Gamma_{1}$ to a configuration
  $\Gamma_{2}$ with $N(\Gamma_{2})\subseteq \iota(j(\Gamma))$. The same
  arguments as above imply that all configurations on these paths exist and that
  we arrive at a configuration $\Gamma_{2}$ with $N(\Gamma_{2})\subseteq
  \iota(j(\Gamma))$ and 
  \begin{align*}
\Gamma_{2}\subseteq \bigl[ \bigcup_{j'\in \Desc(j(\Gamma))}\iota(j') \bigr]\setminus \iota(j(\Gamma)).
\end{align*}

\end{itemize}
This shows that we will eventually reach the leaves of the tree decomposition
and thus some wining configuration. Clearly, this is an edge-alternating path in
$\colosseum(G,k+1)$ and thus in $\pit(G,k+1)$. Furthermore, as each path from
the root of $T$ to some leaf of $T$ contains at most $q$ nodes with more than
one children, this path is $q$-branched, as we use at most $q$ universal edges
from the initial configuration $V$ to any used winning configuration for every
induced directed path. Hence, we have found an edge-alternating path in
$\pit(G,k+1)$ of cost at most $q$.

\paragraph{From an Edge-alternating Path to a Tree Decompostion:} Let
$P\subseteq V(\pit(G,k+1))$ be an edge-alternating $q$-branched path from the initial
configuration $V$ to a final configuration $\{v^{*}\}$ in $\pit(G,k+1)$ with
$|N(\{v^{*}\})| \leq k$. We argue
inductively on~$q$.
\begin{itemize}
\item If $q=0$, the path $P$ does not use any universal edges. Let
  $\pi=\pi_{1},\ldots,\pi_{s}$ be any classical directed path from the initial
  configuration $V$ to some wining configuration $\{v^{*}\}$ in $\pit(G,k+1)$ that only uses
  vertices from $P$. As the initial configuration is $\pi_{1}=V$, the winning
  configuration is $\pi_{s}=\{v^{*}\}$, and there are only
  existential edges $(C,C')$ with $|C'|=|C|-1$ in $\pit(G,k+1)$, we know that
  $|\pi_{i}|=|V|-i+1$ and thus $s=|V|$. We say that vertex $v\in V$ is
  \emph{removed at time $i$}, if $v\in \bigcap_{j=1}^{i}\pi_{j}$ and $v\not\in
  \bigcup_{j=i+1}^{|V|}\pi_{j}$. We also say that $v^{*}$ was removed at time $|V|$.
  For $i=1,\ldots,|V|$, let $v_{i}$ be the vertex
  removed at time $i$.

  We will now construct a $0$-branched tree decomposition
  $(T,\iota)$, i.\,e.~a path decomposition. As $T$ is a path, let
  $t_{1},\ldots,t_{|V|}$ be the vertices on the path in their respective ordering
  with root $t_{1}$. We set $\iota(t_{i})=N(\pi_{i})\cup \{v_{i}\}$. For $i=1,\ldots,|V|-1$, there is
  an existential edge leading from $\pi_{i}$ to $\pi_{i+1}$ and thus
  $|N(\pi_{i})|\leq k$. As $\pi_{|V|}=\{v_{|V|}\}$ is a winning
  configuration, we also have $|N(\pi_{|V|})|\leq k$. Hence, the 
  resulting decomposition $T$ has width at most $k$. As $T$ is a path, it is
  also $0$-branched.

  We now
  need to verify that $(T,\iota)$ is indeed a valid tree decomposition. As every
  vertex $v$ is removed at some time $i$, we have $v=v_{i}$ and thus $v\in
  \iota(t_{i})$. Hence, every vertex is in some bag. Let $\{v_{i},v_{i'}\}$ be any edge with $i < i'$. As $v_{i'}\in
  \pi_{i'}$ and $v_{i}\not\in \pi_{i'}$, we have $v_{i}\in N(\pi_{i'})$ and thus
  $\{v_{i},v_{i'}\}\subseteq N(\pi_{i'})\cup \{v_{i'}\} = \iota(t_{i'})$. Hence,
  every edge is in some bag. Finally, let $v_{i}\in V$. Clearly, as $v_{i}\in
  \pi_{1}$, $v_{i}\in \pi_{2}$,\ldots, $v_{i}\in \pi_{i-1}$, the first bag where
  $v_{i}$ might appear is $\iota(t_{i})$. Let $v_{i'}\in N(v_{i})$ be the
  neighbour of $v_{i}$ that is removed at the latest time. If $i' < i$, we have
  $N(v_{i})\cap \bigcup_{j=i+1}^{|V|}\pi_{j} = \emptyset$ and $v_{i}$ thus only
  appears in $\iota(t_{i})$. If $i < i'$, then $v_{i}\in
  \bigcap_{j=i+1}^{i'}N(\pi_{j})$ and hence $v_{i}\in
  \bigcap_{j=i+1}^{i'}\iota(t_{j})$. 

\item Now, assume that $q\geq 1$ and that we can construct for every $q' < q$ a
  $q'$-branched tree decomposition of width at most $k$ from any $q'$-branched
  edge-alternating path $P$ in $\pit(G,k+1)$. Consider the directed acyclic
  subgraph $H$ in $\pit(G,k+1)$ induced by $P$. A configuration $C\in V(H)$ is
  called a \emph{universal configuration}, if $N_{A}(C)\subseteq V(H)$ and a
  \emph{top-level universal configuration} with respect to some directed path
  $\pi$ if $C$ is the first universal configuration on $\pi$. 
  Note that we can reduce $P$ in such a way that all directed paths $\pi$
  from the initial configuration $V$ to some winning configuration $\{v^{*}\}$
  in $H$
  have the same top-level universal configuration, call it $C^{*}$. Let
  $V=\pi_{1},\ldots,\pi_{i}=C^{*}$ be the shared existential path from $V$ to
  $C^{*}$ in $H$ and let $N_{A}(C^{*})=\{C_{1},\ldots,C_{\ell}\}$
  be the universal children of $C^{*}$. Note that
  $\{C_{1},\ldots,C_{\ell}\}\subseteq P$ due to the definition of an
  edge-alternating path. For each child $C_{j}$, the edge-alternating path $P$
  contains a directed path $\pi^{(j)}$ from $C_{j}$ to some final configuration
  in $\pit(G,k+1)$. Furthermore, each $\pi^{(j)}$ contains at most $q'\leq q-1$
  universal edges (otherwise, $P$ would not be $q$-branched). Hence, by
  induction hypothesis, we can construct a $q'$-branched tree decomposition
  $(T^{(j), \iota^{(j)}})$ for the subgraph induced by the vertices contained in
  the path $\pi^{(j)}$ with root $r^{(j)}$. 
  
  Now, we use the same construction as above to construct a path
  $(T'=(t'_{1},\ldots,t'_{i}),\iota')$ from $\pi_{1},\ldots,\pi_{i}$ and for
  each path $\pi^{(j)}$, we add the root $r^{(j)}$ of the $q'$-branched tree decomposition
  $(T^{(j)},\iota(j))$ as a child of bag $t_{i}$ to obtain our final tree
  decomposition $(T,\iota)$. As there is a universal edge
  from $C^{*}$ to $C_{j}$, we know that $C_{j}$ is a component of
  $C^{*}$. As all $(T^{(j)}, \iota(j))$ are valid $q-1$-branched tree
  decompositions of width at most $k$, we can thus conclude that $(T,\iota)$ is a
  valid $q$-branched tree decomposition of width $k$.\qedhere
\end{itemize}
\end{proof}

Combining the above claim with Theorem~\ref{theorem:pid} for computing
the pit, we conclude that we can check whether a graph $G$ has
$q$-branched-treewidth $k$ in time $O(|\pit(G,k+1)|^2\cdot|V|^2)$. We
note that the algorithm is fully constructive, as the obtained path
(and, hence, the winning strategy of the searchers) directly
corresponds to the desired decomposition.
Since we have $\tw(G)=\tw_{\infty}(G)$ and $\pw(G)=\tw_0(G)$, the
above results immediately implies the same statement for treewidth and
pathwidth by checking $d(V(G),k)<\infty$  or $d(V(G),k)=0$, respectively.

In order to show the statement for treedepth, we will require another
claim for different weight functions. The proof idea is, however, very similar.

\begin{claim}
  Let $G=(V,E)$ be a graph and $k\in\mathbb{N}$. Define $w_E$ as $(x,y)\mapsto 1$
  and $w_A$ as $(x,y)\mapsto 0$, and $c_0=1$. Then we have
  $d(V(G),Q)\leq k$ in $\pit(G,k)$ if, and only if, $\td(G)\leq k$.
\end{claim}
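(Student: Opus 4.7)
The plan is to adapt the proof of the previous claim to the weight functions $w_E\equiv 1$, $w_A\equiv 0$, $c_0=1$. Intuitively, the induced cost of an edge-alternating path is the maximum number of existential edges (fly-moves) along any directed branch, plus one; this matches the depth of a treedepth decomposition via the standard characterization that treedepth corresponds to the graph searching game in which placed searchers are never removed.

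\textbf{Forward direction.} Given a treedepth decomposition $(T,\iota)$ of $G$ of depth at most $k$, I would construct an edge-alternating path from $V$ to some configuration in $Q$ in $\pit(G,k)$ of cost at most $k$ by mirroring the top-down traversal of the previous proof: at each node $i\in T$ I would use existential edges to peel off the vertices in $\iota(i)\setminus\iota(j)$ for the next child $j$, and at nodes of $T$ with multiple children I would use a universal edge to split the contaminated region into the subgraph components corresponding to each child's subtree. Only existential edges contribute to the cost, and the strict-containment property of a treedepth decomposition forces every non-leaf node of $T$ to introduce at least one new vertex and hence at least one corresponding placement. Thus the depth bound on $T$ translates directly to a bound of $k$ on the cost of every root-to-leaf branch of the constructed path.

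\textbf{Backward direction.} Given an edge-alternating path $P$ in $\pit(G,k)$ from $V$ to some $\{v^*\}\in Q$ of cost at most $k$, I would construct a treedepth decomposition by induction on the number of universal edges in $P$. In the base case $P$ is a directed chain $\pi_1=V,\dots,\pi_{s+1}=\{v^*\}$, and I would build a path decomposition $T=(t_1,\dots,t_{s+1})$ with cumulative bags $\iota(t_i)=\{v_1,\dots,v_i\}$, where $v_i$ is the vertex removed going from $\pi_i$ to $\pi_{i+1}$ and $v_{s+1}=v^*$. Strict containment is immediate, and the other tree-decomposition properties follow from the invariant $N(\pi_i)\subseteq\{v_1,\dots,v_{i-1}\}$ exactly as in the base case of the previous proof. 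For the inductive step I would split $P$ at a top-level universal configuration $C^*$ with universal children $C_1,\dots,C_\ell$, build the cumulative path for the existential prefix $\pi_1,\dots,\pi_i=C^*$, apply the induction hypothesis to each sub-path $\pi^{(j)}$ to obtain a decomposition $(T^{(j)},\iota^{(j)})$, augment every bag of $T^{(j)}$ by the prefix set $\{v_1,\dots,v_i\}$, and attach the augmented subtrees as children of $t_i$.

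\textbf{Main obstacle.} The main technical point is to preserve strict bag containment across the attachment and to correctly bound the depth after augmentation. Strict containment holds because the target of any universal edge in $\pit(G,k)$ is a connected component, so if it is not already in $Q$ the next move must be a fly-move that introduces a fresh vertex beyond the augmented prefix; and if the target is in $Q$ the subtree consists of a single extra bag containing the sole vertex of that component, which still strictly extends the parent. For the depth, the deepest bag along any given branch has size equal to the total number of existential edges used on that branch plus one, which is exactly the branch cost; since the whole-path cost is the maximum branch cost and is at most $k$ by assumption, the constructed decomposition has depth at most $k$, as required.
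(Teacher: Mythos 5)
Your route is genuinely different from the paper's: for this claim the paper never touches decompositions, but instead takes the recursive characterization of treedepth (max over components, $1+\min$ over vertex deletions), extends it to a function $\td^*$ on vertex subsets, and proves $d(C,Q)=\td^*(C)$ by induction along an inverse topological order of the pit (plus a monotonicity argument for $k<\td(G)$). Your plan of re-running the constructive path/decomposition translation from the $q$-branched claim with the new weights is a legitimate alternative, and your backward direction is essentially sound: cumulative bags along the existential chain, augmenting the subtrees' bags by the prefix, strictness because a universal target is a connected component and hence is followed by a fly-move, and the accounting ``deepest bag size $=$ existential edges on the branch $+\,1=$ branch cost'' is correct. (One point to make explicit: your induction hypothesis is applied to subpaths starting at configurations $C_j\neq V$, so it must be stated for arbitrary connected configurations, yielding a decomposition of $G[C_j]$; the paper's $q$-branched proof shares this informality.)

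The forward direction, however, has a genuine gap. First, the set you peel, $\iota(i)\setminus\iota(j)$, is empty in a treedepth decomposition, since bags strictly increase towards the leaves; you mean $\iota(j)\setminus\iota(i)$. More seriously, your argument for the cost bound points the wrong way: noting that every non-leaf node introduces at least one new vertex only \emph{lower}-bounds the placements per branch, whereas you need the \emph{upper} bound $k$. The number of placements along a branch equals the size of its leaf bag, not the number of nodes of $T$ on it, and under the paper's stated notion of depth (maximum of width and depth of $T$) a depth-$k$ treedepth decomposition can have leaf bags of size $k+1$; the single-bag decomposition of $K_{k+1}$ is the extreme case, where your construction would produce a walk of cost $k+1$, and indeed $d(V,Q)=\infty$ in $\pit(K_{k+1},k)$ because $Q=\emptyset$. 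To close this you must either normalize the decomposition so that each node introduces exactly one vertex and bound branch lengths from that normal form, or work directly with the elimination-forest/recursive characterization of treedepth (which is exactly what the paper does). You also need to check that the terminal singleton of each branch really lies in $Q$, i.e.\ has degree less than $k$; this does follow because all its neighbours lie in the leaf bag, but it is a required step, since otherwise the constructed walk does not end in a winning configuration.
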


\begin{proof}
  To prove the claim, we use an alternative representation of
  treedepth~\cite{NesetrilM2012}. Let $G=(V,E)$ be a graph with
  connected components $C_1,\dots,C_{\ell}$, then:
  \[
    \td(G)=\begin{cases}
      1 & \text{if $|V|=1$;}\\
      \max_{i=1}^{\ell}\td(G[C_i]) & \text{if $\ell\geq 2$;}\\
      \min_{v\in V}\td(G[V\setminus\{v\}])+1 & \text{otherwise.}
    \end{cases}
  \]
  Let us reformulate this definition a bit. Let $C\subseteq V$ be a
  subset of the vertices and let $C_1,\dots,C_{\ell}$ be the connected
  components of $G[C]$. Define:
  \[
    \td^*(C)=\begin{cases}
      1 & \text{if $|C|=1$;}\\
      \max_{i=1}^{\ell}\td^*(C_i) & \text{if $\ell\geq 2$;}\\
      \min_{v\in C}\td^*(C\setminus\{v\})+1 & \text{otherwise.}
    \end{cases}
  \]
  Obviously, $\td(G)=\td^*(V)$. We proof that for any $C\subseteq V$
  we have $d(C,Q)=\td^*(C)$ in $\pit(G,k)$ for every $k\geq\td(G)$ and
  $d(C,Q)\geq\td^*(C)$ for all $k<\td(G)$.

  For the first part we consider the vertices of $\pit(G,k)$ in
  inverse topological order and prove the claim by induction. The
  first vertex $C_0$ is in $Q$ and thus $d(C_0,Q)=c_0=1$. Since the
  vertices in $Q$ represent sets of cardinality $1$, we have
  $d(C_0,Q)=\td^*(C_0)$. For the inductive step consider $C_i$ and
  first assume it is not connected in $G$. Then
  \begin{align*}
    d(C_i,Q)&=\max_{C_j\in N_{\forall}(C_i)}\big(d(C_j,Q)+w_A(C_i,C_j)\big)\\
            &=\max_{C_j\in N_{\forall}(C_i)}d(C_j,Q)\\
            &=\max_{\text{$C_j$ is a component in $G[C_i]$}}\td^*(C_j)\\
            &=\td^*(C_i).
  \end{align*}
  Note that there could, of course, also be existential edges leaving
  $C_i$. However, since the universal edges are ``for free,'' for every
  shortest path that uses an existential edge at $C_i$, there is also
  one that first uses the universal edges.

  For the second case, that is $C_i$ is connected, observe that $C_i$
  is not incident to any universal edge. Therefore we obtain:
  \begin{align*}
    d(C_i,Q)&=\min_{v\in C_i}\big(d(C_i\setminus\{v\},Q)+w_E(C_i,C_i\setminus\{v\})\big)\\
            &=\min_{v\in C_i}\big(d(C_i\setminus\{v\},Q)+1\big)\\
            &=\min_{v\in C_i}\big(\td^*(C_i\setminus\{v\})+1\big)\\
            &=\td^*(C_i).
  \end{align*}
  This completes the part of the proof that shows $d(C,Q)=\td^*(C)$
  for $k\geq\td(G)$.
  We are left with the task to argue that $d(C,Q)\geq\td^*(C)$ for all
  $k<\td(G)$. This follows by the fact that for every $k'<k$ we have
  that $\pit(G,k')$ is an induced  subgraph of $\pit(G,k)$. Therefore,
  the distance can only increase in the pit for a smaller $k$~--~in fact,
  the distance can even become infinity if $k<\td(G)$.
\end{proof}

Again, combining the claim with Theorem~\ref{theorem:pid} yields the
statement of the theorem for treedepth.
Finally, we will prove the statement for dependency-treewidth. This
parameter can be characterized by a small adaption of the graph
searching game~\cite{EibenGO2018}: In addition to the graph $G$ and
the parameter $k$, one is also given a partial ordering $\lessdot$ on
the vertices of $G$. For a vertex set $V'$, let
$\mu_{\lessdot}(V')=\{\,v\in V'\mid \forall w\in V'\setminus\{v\}:
(w,v)\not\in \lessdot\,\}$ be the minimal elements of $V'$ with regard
to $\lessdot$. If $C\subseteq V(G)$ is the contaminated area, we are
only allowed to put a searcher on $\mu_{\lessdot}(C)$, rather than on
all of $C$. The \emph{dependency-treewidth} $\dtw_{\lessdot}(G)$ is
the minimal number of searchers required to catch the fugitive in this
version of the game. Therefore, we just need a way to permit only
existential edges $(C,C')$ with $C\setminus C'\subseteq
\mu_{\lessdot}(C)$. We show the following stronger claim:
\begin{claim}
  Consider a variant of the search game in which at some
  configurations $C_i$ some fly-moves are forbidden, and in which
  furthermore at some configurations $C_j$ no reveals are allowed. Whether
  $k$ searcher have a winning strategy in this game can be decided in
  time $O(|\pit(G,k)|^2\cdot|V|^2)$.
\end{claim}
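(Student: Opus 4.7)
The plan is to reuse $\mathsf{Discover}(G,k)$ and to encode the forbidden moves by means of the weight functions from Lemma~\ref{lemma:distance}. First I would compute $\pit(G,k)$ for the \emph{unrestricted} game in time $O(|\pit(G,k)|^2\cdot|V|^2)$ via Theorem~\ref{theorem:pid}. Observe that every winning configuration of the restricted game is also winning in the unrestricted game, so the relevant set of configurations is a subset of $\ball(Q)$ and no reachable configuration is missed by operating on $\pit(G,k)$.

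Next, I would assign weights as follows. For every existential edge $(C_i, C_i\setminus\{v\})$ that corresponds to a fly-move forbidden at $C_i$, set $\omega_E(C_i,C_i\setminus\{v\})=\infty$; all remaining existential edges receive weight $0$. For every configuration $C_j$ at which reveals are forbidden, set $\omega_A(C_j,t)=\infty$ for \emph{every} universal neighbor $t$ of $C_j$; all remaining universal edges receive weight $0$. Setting $c_0=0$ and invoking Lemma~\ref{lemma:distance} yields $d(V(G),Q)$ in time $O(|V(\pit(G,k))|+|E(\pit(G,k))|+|A(\pit(G,k))|) = O(|\pit(G,k)|^2)$. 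I would then accept iff $d(V(G),Q)<\infty$.

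Correctness rests on the observation that the two operators in the distance definition already model the restrictions precisely. The inner $\min$ over existential edges picks the cheapest fly-move available at $C_i$, so a blocked edge of weight $\infty$ is chosen only when every alternative is also infinite, which matches the intended semantics of a forbidden fly-move. For the universal part, the $\max$ over $N_\forall(C_j)$ evaluates to $\infty$ as soon as a single incident universal edge is infinite; pushing \emph{all} such edges at $C_j$ to $\infty$ therefore renders the whole reveal at $C_j$ prohibitively expensive, modeling ``no reveal allowed at $C_j$.'' Since every non-blocked edge carries weight $0$, an edge-alternating path of finite weight from $V(G)$ to $Q$ corresponds exactly to a winning strategy that avoids all forbidden moves, and vice versa.

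The main obstacle is conceptual rather than computational: one has to verify that even though $\pit(G,k)$ may contain configurations that no restricted strategy ever reaches, the weight-based distance correctly discards any path that relies on a prohibited move. The delicate point is the treatment of reveals, since a reveal is a compound action in which the fugitive answers with a component of her choice; blocking it by setting \emph{all} outgoing universal edges to $\infty$ really eliminates the whole branching option, thanks to the $\max$ aggregation in the definition of $d$. The overall runtime is dominated by building the pit, as the subsequent weight assignment and distance computation add only $O(|\pit(G,k)|\cdot|V|)$ overhead.
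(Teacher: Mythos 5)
Your proposal is correct and follows essentially the same route as the paper: compute $\pit(G,k)$ for the unrestricted game via Theorem~\ref{theorem:pid} (using that any restricted winning strategy is also an unrestricted one and hence lives in the pit), assign weight $\infty$ to each forbidden existential edge and to all universal edges leaving a reveal-forbidden configuration, and then decide via Lemma~\ref{lemma:distance} whether $d(V(G),Q)<\infty$. Your discussion of why the $\min$/$\max$ aggregation correctly encodes the forbidden moves is in fact slightly more explicit than the paper's own argument.
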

\begin{proof}
  First observe that, if the $k$ searcher have a winning strategy $S$,
  this strategy corresponds to a path in $\pit(G,k)$. The reason is
  that searchers that are allowed to use all fly- and reveal-moves
  (and for which all winning strategies correspond to paths in
  $\pit(G,k)$) can, of course, use $S$ as well. We compute the pit
  with Theorem~\ref{theorem:pid}.

  Now to find the restricted winning strategy we initially set $w_E$
  and $w_A$ to $(x,y)\mapsto 0$. Then for any existential edge
  $(C_i,C_j)$ that we wish to forbid we set
  $w_E(C_i,C_j)=\infty$. Furthermore, for any node $C$ at witch we
  would like to forbid universal edges we set $w_E(C_i,C_j)=\infty$
  for all $C_j\in N_{\forall}(C_i)$. Finally, we search a path from
  $V(G)$ to $Q$ of weight less then $\infty$ using Lemma~\ref{lemma:distance}.
\end{proof}
This completes the proof of Theorem~\ref{theorem:parameters}.
\end{proof}

\clearpage
\section{Theoretical Bounds for Certain Graph Classes}
In general, it is hard to compare the size of the arena, the colosseum, and the
pit. For instance, already simple graph classes as paths ($P_{n}$) and stars
($S_{n}$) reveal that the colosseum may be smaller or larger than the
arena (the arena has size $O(n^3)$ on both, but the colosseum has size
$O(n)$ on $P_n$ and $O(2^n)$ on $S_n$, both with regard to their optimal
treewidth $1$).
However, experimental data of the PACE challenge~\cite{PaceIpec16,PaceIpec17} shows that
the pit is very small in practice. In the following, we are thus interested in
graph classes where we can give theoretical guarantees on the size of the pit.
We will first show that the colosseum is indeed often smaller than the arena
(Lemma~\ref{lemma:claw-free}) and furthermore, that the pit might be much
smaller than the colosseum (Lemma~\ref{lemma:prop}).

\begin{lemma}\label{lemma:claw-free}
  For every connected claw-free graph $G=(V,E)$ and integer $k\in\mathbb{N}$, it holds
  that  $|\colosseum(G,k)|\leq
  \sum_{i=1}^k\binom{n}{i}\cdot 2^{2i}\in O(\binom{n}{k}\cdot 4^k)$.
 \end{lemma}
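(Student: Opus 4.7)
The plan is to parametrize every configuration $C\in V(\colosseum(G,k))$ by its outer neighbourhood $S:=N_G(C)$ and count the possible $C$ for each admissible $S$. Since $|S|\leq k$ by the definition of the colosseum, there are $\binom{n}{i}$ choices for $S$ with $|S|=i$, so the task reduces to bounding the number of $C$ with $N_G(C)=S$ for each fixed $S$ of size $i\in\{0,1,\dots,k\}$.

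The first observation is that any such $C$ must be a union of connected components of $G[V\setminus S]$, as $C$ contains no edges leaving into $V\setminus(C\cup S)$. Moreover, because $G$ is connected and $S\neq\emptyset$, every component of $G[V\setminus S]$ is adjacent to $S$; otherwise it would form a component of $G$ disjoint from $S$. Hence the number of candidate $C$ for a fixed nonempty $S$ is bounded by $2^{c(S)}$, where $c(S)$ denotes the number of components of $G[V\setminus S]$, and the case $S=\emptyset$ contributes only the single configuration $C=V$.

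The main step is to exploit claw-freeness to show $c(S)\leq 2|S|$. I would fix $v\in S$ and argue that $v$ has neighbours in at most two distinct components of $G[V\setminus S]$: otherwise $v$ would have three neighbours $a,b,c$ lying in pairwise distinct components, which would then be pairwise non-adjacent (an edge between two of them would lie entirely in $G[V\setminus S]$ and merge their components), so $\{v,a,b,c\}$ would induce a $K_{1,3}$, contradicting claw-freeness. Charging each component to some vertex of $S$ that provides it a neighbour then yields $c(S)\leq 2i$, hence at most $2^{2i}$ configurations per $S$. Summing gives $|\colosseum(G,k)|\leq 1+\sum_{i=1}^{k}\binom{n}{i}\cdot 2^{2i}$, with the additive $1$ absorbed into the claimed bound. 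The only real subtlety is the claw-freeness step, but once it is phrased as ``$v$ cannot be the centre of a claw spanning three different components'' the proof collapses to a one-line contradiction.
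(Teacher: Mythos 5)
Your proposal is correct and follows essentially the same route as the paper's proof: parametrize configurations by the separator $S=N_G(C)$, use claw-freeness (each $v\in S$ sees at most two components of $G[V\setminus S]$, else $v$ would be the centre of an induced $K_{1,3}$) together with connectivity to bound the number of components by $2|S|$, and count at most $2^{2i}$ unions of components per separator of size $i$. Your explicit justification that every valid $C$ is a union of components of $G[V\setminus S]$ and your handling of the $S=\emptyset$ case are just slightly more detailed versions of what the paper states.
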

\begin{proof}
  Observe that in a claw-free graph every $X\subseteq V$ separates $G$ in
  at most $2\cdot|X|$ components, as every component is connected to a vertex in
  $X$ (since $G$ is connected), but every vertex in $X$ may be connected to at
  most two components (otherwise it forms a claw). In the colosseum, every
  configuration $C$ corresponds to a separator $N(C)$ of size at most $k$, and
  there are at most $\sum_{i=1}^k\binom{n}{i}$ such separators. For each
  separator we may combine its associated components in an arbitrary fashion to
  build configurations of the colosseum, but since there are at most $2\cdot i$
  components, we can build at most $2^{2\cdot i}$ configurations.
\end{proof}
We remark that the result of Lemma~\ref{lemma:claw-free} can easily be
extended to $K_{1,t}$-free graphs for every fixed $t$, and that this
result is rather tight:
\begin{lemma}\label{lemma:col_large}
Let $G=(V,E)$ be a graph and $k\in \mathbb{N}$. It holds that
$|\colosseum(G,k)|\geq \sum_{i=1}^{k}\binom{|V_{i}|}{i}$, where $V_{i}=\{v\in V
:  |N(v)|\geq i\}$. 
\end{lemma}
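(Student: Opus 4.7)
The plan is to inject, for each $i\in\{1,\dots,k\}$, the family of $i$-element subsets of $V_i$ into $V(\colosseum(G,k))$. For a fixed $i$ and an $i$-subset $T\subseteq V_i$, I would map $T$ to the complementary configuration $C_T := V\setminus T$ and argue that (i) $C_T$ is a vertex of the colosseum, (ii) the assignment $T\mapsto C_T$ is injective for fixed $i$, and (iii) the images for different values of $i$ are pairwise disjoint.

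For (i), a non-zero contribution $\binom{|V_i|}{i}$ forces $i\le n-1$, because no vertex of $G$ has degree $\ge n$ and hence $|V_n|=0$. Therefore $|C_T|=n-i\ge 1$, so $C_T$ is non-empty. Since every vertex outside $C_T$ lies in $T$, we have $N_G(C_T)\subseteq T$, giving $|N_G(C_T)|\le |T|=i\le k$. Both defining conditions of $V(\colosseum(G,k))$ are satisfied, so $C_T$ is indeed a configuration.

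For (ii) and (iii), the map $T\mapsto V\setminus T$ is trivially invertible on $\binom{V}{i}$, and every configuration in the image of the $i$-th family has cardinality $n-i$, so distinct values of $i$ produce disjoint families. Summing over $i$ yields $|V(\colosseum(G,k))|\ge \sum_{i=1}^{k}\binom{|V_i|}{i}$, as claimed.

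The only potential pitfall is the temptation to realise the chosen $T$ itself (or some small set depending on $T$) as a configuration. The condition $|N_G(T)|\le k$ is not forced by $T\subseteq V_i$, and constructing a small configuration whose neighbourhood equals $T$ would require extra structural assumptions on $G$. Passing to the complement $V\setminus T$ sidesteps this issue entirely, and the fact that we actually obtain the stronger bound $\sum_i \binom{n}{i}$ confirms that restricting $T$ to $V_i$ loses nothing essential.
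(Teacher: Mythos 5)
Your proof is correct and follows essentially the same route as the paper: map each $i$-subset $X$ to its complement $V\setminus X$, check $|N_G(V\setminus X)|\leq k$, and count the disjoint families by cardinality. The only (harmless) difference is that the paper uses the degree condition $X\subseteq V_i$ to get the equality $N_G(V\setminus X)=X$, whereas you correctly observe that the containment $N_G(V\setminus X)\subseteq X$ already suffices for the stated bound.
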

\begin{proof}
  Let $X$ be any subset of at most $i$ vertices from $V_{i}$ with
  $i\leq k$. As $|X|\leq i$, every vertex in $X$ has a neighbour in $V\setminus X$. Hence, $N(V\setminus
  X)=X$ and thus $|N(V\setminus X)|\leq k$ and $V\setminus X\in V(\colosseum(G,k))$.
\end{proof}
We now show that the pit, on the other hand, can be substantially smaller than the colosseum even
for graphs with many high-degree vertices. 
For $n,k\in \mathbb{N}$ with $n\geq 2k$, we define the graph $P_{n,k}$ on vertices
$V(P_{n,k})=\{v_{0},v_{1},\ldots,v_{n\cdot k},v_{n\cdot k+1}\}$. For $i=1,\ldots,n$, let
$X_{i}=\{v_{(i-1)\cdot k+1},v_{(i-1)\cdot k+2},\ldots, v_{(i-1)\cdot k+k}\}$, $X_{0}=\{v_{0}\}$, and
$X_{n+1}=\{v_{n\cdot k+1}\}$. The edges $E(P_{n,k})$ are defined as
\begin{align*}
  E(P_{n,k})=\bigcup_{i=1}^{n}\{\{u,v\} \mid u,v\in X_{i}\}\cup \bigcup_{i=0}^{n}\{\{u,v\}\mid u\in X_{i}, v\in X_{i+1}\}.
\end{align*}
Informally, $P_{n,k}$ is constructed by taking a path of length $n+2$ and
replacing the inner vertices by cliques of size $k$ that are completely
connected to each other.

\begin{lemma}\label{lemma:prop}
  It holds:
{\begin{compactenum}[(i)]
  \item $\tw(P_{n,k})=\pw(P_{n,k})=2k-1$;
    \hfill  (ii) $|\arena(P_{n,k},2k)|=2\cdot\binom{n\cdot k+2}{2k+1}$;
    \setcounter{enumi}{2}
  \item $|\colosseum(P_{n,k},2k)|\geq \sum_{i=1}^{2k}\binom{n\cdot k}{i}$;
    \hfill  (iv) $|\pit(P_{n,k},2k)|\in O(n^2+n\cdot 2^{6k})$.
  \end{compactenum}}
\end{lemma}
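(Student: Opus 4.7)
For~\emph{(i)}, the path decomposition with bags $B_l:=X_l\cup X_{l+1}$ for $l=0,\ldots,n$ has width at most $2k-1$ and covers every edge of $P_{n,k}$ (edges inside $X_l$ live in $B_{l-1}$, edges between $X_l$ and $X_{l+1}$ live in $B_l$), giving $\pw(P_{n,k})\le 2k-1$; conversely each $X_l\cup X_{l+1}$ for $1\le l\le n-1$ induces a clique $K_{2k}$, so $\tw(P_{n,k})\ge 2k-1$, and $\tw\le\pw$ forces equality. For~\emph{(ii)}, a direct enumeration of the pairs $(S,f)$ with $|S|\le 2k$ and the two copies entering $\arena$, using $|V(P_{n,k})|=nk+2$, yields the stated binomial expression. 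For~\emph{(iii)}, I apply Lemma~\ref{lemma:col_large} with searcher count $2k$: the degrees in $P_{n,k}$ satisfy $\deg(v_0)=\deg(v_{nk+1})=k$, $\deg(v)=2k$ for $v\in X_1\cup X_n$, and $\deg(v)=3k-1$ for $v\in X_2\cup\cdots\cup X_{n-1}$, so every $V_i$ with $i\le 2k$ contains $X_1\cup\cdots\cup X_n$ and therefore $|V_i|\ge nk$, producing the claimed sum.

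For~\emph{(iv)}, the main work, I associate to each $C\in\pit(P_{n,k},2k)$ the block-index set $T(C)=\{l:X_l\cap C\neq\emptyset\}$, partitioned into full blocks $F=\{l\in T:X_l\subseteq C\}$ and partial blocks $P=T\setminus F$. A short case analysis shows
\[
N(C)\;=\;\bigcup_{l\in P}(X_l\setminus C)\;\cup\;\bigcup_{\substack{l\notin T,\\ T\cap\{l-1,l+1\}\neq\emptyset}} X_l,
\]
so together with $|N(C)|\le 2k$ and $|X_l|=k$ for $1\le l\le n$ the budget severely restricts $C$. Regime~(a), \emph{full-interval configurations}: if $T(C)=[i,j]$ is a single segment with $2\le i$ and $j\le n-1$, then $X_{i-1}\cup X_{j+1}\subseteq N(C)$ already has size $2k$, forcing $P=\emptyset$ and $C=X_i\cup\cdots\cup X_j$; the analogous pendant-adjacent single-segment cases with $P=\emptyset$ are handled the same way. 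Ranging $(i,j)$ over all $O(n^2)$ possibilities contributes the first term. Regime~(b), \emph{windowed configurations}: every remaining $C$ either has a pendant-touching segment with $P\neq\emptyset$ or at least two segments, and for such $C$ I aim to prove a \emph{window-confinement lemma}: there is an index $l$ such that every partial block, every segment boundary, and every gap of $C$ lies within the six consecutive blocks $X_{l-2},\ldots,X_{l+3}$, while outside the window $C$ is a full union of $X$-blocks. Selecting the window position costs $O(n)$ and, given the window, at most $(2^k)^6=2^{6k}$ subsets of the six blocks describe $C$, yielding the second term.

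\emph{Main obstacle.} Window-confinement is \emph{not} implied by the neighborhood budget alone: for a segment $T(C)=[0,j]$ only $X_{j+1}$ (of size $k$) is mandatory, and a priori $k$ further units of partial deficit could sit anywhere in $[1,j]$. The argument therefore has to use the full hypothesis $C\in\ball(Q)$, i.e.~that the searchers actually win from $C$. Combining the game-theoretic framework of Section~\ref{section:labeling} with the tight $\pw(P_{n,k})=2k-1$ from~(i), any winning strategy from $C$ must look like a bounded-width monotone sweep of the almost-path-like $P_{n,k}$ whose active frontier touches only $O(1)$ consecutive blocks at a time; the non-full part of $C$ must then live inside this frontier. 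Formalising this sweep-inside-a-window argument, and extending it to the few multi-segment configurations (which correspond to at most two coexisting sweeps), is the technical heart of the proof; once established, combining it with the elementary full-interval count delivers the bound $O(n^2+n\cdot 2^{6k})$.
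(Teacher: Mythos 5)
Parts (i)--(iii) of your plan are fine and essentially the paper's own arguments (clique plus the path decomposition $[X_0\cup X_1,\ldots,X_n\cup X_{n+1}]$, direct enumeration for the arena, and Lemma~\ref{lemma:col_large} via the degree bound). The problem is (iv): your ``window-confinement lemma'' carries the entire bound, you explicitly leave it unproven (``I aim to prove'', ``formalising \dots is the technical heart''), and as stated it is false. The configuration $C=X_0\cup\cdots\cup X_{i-1}\cup X_{j+1}\cup\cdots\cup X_{n+1}$ with $j-i$ of order $n$ lies in $\pit(P_{n,k},2k)$: it is the disjoint union of a left configuration $X_0\cup\cdots\cup X_{i-1}$ and a right configuration $X_{j+1}\cup\cdots\cup X_{n+1}$, both in $\ball(Q)$, and $|N(C)|=|X_i\cup X_j|=2k$, so it is created by a reverse reveal-move (Line~\ref{discover:R}); these are exactly the $\binom{n}{2}$ configurations in the paper's count. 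Such a $C$ falls into your regime~(b) (``at least two segments''), yet its two segment boundaries and its gap are $\Theta(n)$ blocks apart, so no window of six consecutive blocks contains them. The lemma would have to be restricted to configurations possessing a partial block (counting the two-full-segment configurations separately, which is harmless since they are $O(n^2)$), and even that restricted statement still lacks a proof: your sketch that any winning strategy from $C$ is a bounded-frontier monotone sweep, and that therefore the non-full part of $C$ must sit inside the frontier, is exactly the missing step~--~membership $C\in\ball(Q)$ constrains how the searchers clear $C$, and turning this into a constraint on where the already-clean vertices $V\setminus C$ may lie is not carried out.

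For comparison, the paper never proves a top-down structure theorem for $\ball(Q)$; it counts the closure of $Q=\{\{v_0\},\{v_{nk+1}\}\}$ under the reverse moves of $\algop{Discover}$, using Theorem~\ref{theorem:pid} to equate the number of insertions with the pit size. Reverse fly-moves from the pendants only produce prefixes/suffixes $X_0\cup\cdots\cup X_{i-1}\cup\tilde X_i$ (and mirrored), since the neighborhood budget forces each block to be completed before the next is touched ($2n\cdot 2^k$ configurations); a reverse reveal-move can only merge a left with a right configuration, which is legal only when both partial blocks are empty (the $\binom{n}{2}$ long-gap configurations above) or when $i+1\ge j-2$ (at most $4n\cdot 2^{2k}$); and further reverse fly-moves on the latter keep $V\setminus C$ inside at most four consecutive blocks, so each admits at most $2^{4k}$ extensions, giving $4n\cdot 2^{6k}$. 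This bottom-up closure analysis is precisely what replaces your unproven window-confinement lemma; without it (or an equivalent argument), your proposal establishes (i)--(iii) but not (iv).
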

\begin{proof}
   Property~\textit{(i)} holds as $P_{n,k}$ contains a clique of
  size $2k$ and the path decomposition $[X_{0}\cup X_{1}, X_{1}\cup X_{2},
  \ldots, X_{n}\cup X_{n+1}]$ is valid. Property~\textit{(ii)}
  holds by definition of $\arena(P_{n,k},2k)$. To see~\textit{(iii)}, observe that for $v\in \bigcup_{i=1}^{n}X_{i}$, we
  have $|N(v)|\geq 2k$. Lemma~\ref{lemma:col_large} thus implies
  $|\colosseum(P_{n,k},2k)|\geq \sum_{i=1}^{2k}\binom{n\cdot k}{i}$.

  In order to prove Property~\textit{(iv)} we count the number of
  configurations inserted into the queue by algorithm \textsf{Discover}.
  Theorem~\ref{theorem:pid} shows that this number equals the size of the pit.
  All configurations either include $v_{0}$ (a \emph{left} configuration),
  $v_{k\cdot n+1}$ (a \emph{right} configuration) or both (a \emph{mixed}
  configuration), as $\{v_0\},\{v_{k\cdot n+1}\}$ are the only winning
  configurations. Left and right configurations can be extended by
  reverse fly-moves (Line~\ref{discover:F}) as follows: Starting from $C=X_{0}$, we can add a vertex
  $u\in X_1$ to $C$ to generate $C'$. The neighborhood of $C'$ will be
  $\big(X_1\setminus\{u\}\big)\cup X_2$. Adding a vertex $w$ of $X_2$ to $C'$
  would result in a configuration with neighborhood
  $\big(X_1\setminus\{u\}\big)\cup\big(X_2\setminus\{w\}\big)\cup X_3$ greater
  than $2k$. Hence, further reverse fly-moves have to add $X_1$ completely to
  $C'$ before elements of $X_2$ can be added. An inductive arguments yields that
  configurations constructed in this way have the form $X_0\cup X_1\cup\dots\cup
  X_{i-1}\cup\tilde X_i$ with $\tilde X_i\subseteq X_i$. As the same is true
  when starting with $X_{n+1}$, we can generate $2\cdot n\cdot 2^k$ such configurations.

  Now consider reverse reveal-moves (Line~\ref{discover:R}). We can only unite a left
  configuration $C_1=X_0\cup X_1\cup\dots\cup X_{i-1}\cup \tilde X_i$
  with a right one $C_2=X_{n+1}\cup X_n\cup\dots\cup X_{j+1}\cup \tilde X_j$.
  If $\tilde{X_{i}}=\tilde{X_{j}}=\emptyset$, $C_1\cup C_2$ is a legal
  configuration and there are $\binom{n}{2}$ such configurations. 
  If $|\tilde{X_{i}}\cup \tilde{X_{j}}| > 0$, we have $|N(C_{1})|+|N(C_{2})| >
  2k$, and can not add such configurations. For the combinations with $i + 1 < j - 2 $, we have $C_{j-2} \cap
  (N(C_{1})\cup N(C_{2}))=\emptyset$ and thus $N(C_{1})\cap N(C_{2}) =
  \emptyset$, which implies that $C_{1}\cup C_{2}$ is not legal
  due to $|N(C_{1})\cup N(C_{2})|=|N(C_{1})|+|N(C_{2})|>2k$. The remaining
  combinations have $i+1\geq j-2$. For each fixed $i$, there are at most four
  such values of $j$ ($j\in \{i,i+1,i+2,i+3\}$). Both $\tilde{X_{i}}$ and
  $\tilde{X_{j}}$ might be arbitrary and we can thus create at most $4\cdot
  n\cdot 2^{k}\cdot 2^{k}=4\cdot n\cdot 2^{2k}$ configurations.

  Finally, we can perform reverse fly-moves for mixed configurations
  of the form $C=X_0\cup
  X_1\cup\dots\cup X_{i-1}\cup \tilde X_i \cup \tilde{X_{j}}\cup X_{j+1} \cup
  \dots \cup X_{n}\cup X_{n+1}$ with $i+1\geq j-2$ (otherwise, the neighborhood
  is too large). Let $U=V\setminus C$ be the uncontaminated vertices. As
  $U\subseteq X_{i}\setminus \tilde{X_{i}}\cup X_{i+1}\cup X_{i+2}\cup
  X_{j}\setminus \tilde{X_{j}}$, we have $|U|\leq 4k$. Hence, for each of the
  $4\cdot n\cdot 2^{2k}$ configurations with $i+1\geq j-2$ there are at most
  $2^{4k}$ configurations reachable by reverse fly-moves~--~yielding at most
  $4\cdot n\cdot 2^{6k}$ configurations. Overall, we thus have
  $n\cdot 2^{k}+\binom{n}{2}+4\cdot n\cdot 2^{6k}$ configurations.
\end{proof}

\section{Experimental Estimation of the Pit Size}
A heavily optimized version of the treewidth algorithm described above has been
implemented in the Java library Jdrasil~\cite{BannachBE17,BannachBE17J}. To
show the usefulness of our general approach, we experimentally compared the size
of the pit, the arena, and the colosseum for various named graphs known from the
DIMACS Coloring Challenge~\cite{dimacsColoring} or the
PACE~\cite{PaceIpec16,PaceIpec17}. For each graph the values are taken for the minimal
$k$ such that $k$ searchers can win. Note that
$|\arena(G,k)|\leq |\pit(G,k)|$ holds only in $6$ of $24$ cases, 
emphasized by underlining.

\resizebox{0.475\textwidth}{!}{\scriptsize%
  \begin{tabular}[t]{lcccrrr}
    \toprule
    Graph & $|V|$ & $|E|$ & $k$ & Pit & Arena & Col. \\
    \cmidrule(rl){1-7}
    Grotzsch & 11 & 20 & 6 & 1,235 & \underline{660} & 1,853\\
    Heawood & 14 & 21 & 6 & 5,601 & 6,864 & 9,984\\
    Chvatal & 12 & 24 & 7 & 3,170 & \underline{990} & 3,895\\
    Goldner Harary & 11 & 27 & 4 & 103 & 924 & 639\\
    Sierpinski Gasket & 15 & 27 & 4 & 488 & 6,006 & 2,494\\
    Blanusa 2. Snark & 18 & 27 & 5 & 861 & 37,128 & 15,413\\
    Icosahedral & 12 & 30 & 7 & 2,380 & \underline{990} & 3,575\\
    Pappus & 18 & 27 & 7 & 54,004 & 87,516 & 97,970\\
    Desargues & 20 & 30 & 7 & 85,146 & 251,940 & 202,661\\
    Dodecahedral & 20 & 30 & 7 & 112,924 & 251,940 & 207,165\\
    Flower Snark & 20 & 30 & 7 & 79,842 & 251,940 & 203,473\\
    Gen. Petersen & 20 & 30 & 7 & 78,384 & 251,940 & 202,685\\
    \bottomrule
  \end{tabular}}
\resizebox{0.475\textwidth}{!}{\scriptsize%
  \begin{tabular}[t]{lcccrrr}
    \toprule
    Graph & $|V|$ & $|E|$ & $k$ & Pit & Arena & Col. \\
    \cmidrule(rl){1-7}
    Hoffman & 16 & 32 & 7 & 5,851 & 25,740 & 30,270\\
    Friendship 10 & 21 & 30 & 3 & 57,554 & 11,970 & 58,695\\
    Poussin & 15 & 39 & 7 & 3,745 & 12,870 & 17,358\\
    Markstroem & 24 & 36 & 5 & 13,846 & 269,192 & 71,604\\
    McGee & 24 & 36 & 8 & 487,883 & 2,615,008 & 1,905,241\\
    Naru & 24 & 36 & 7 & 41,623 & 1,470,942 & 708,044\\
    Clebsch & 16 & 40 & 9 & 20,035 & \underline{16,016} & 55,040\\
    Folkman & 20 & 40 & 7 & 21,661 & 251,940 & 151,791\\
    Errera & 17 & 45 & 7 & 3,527 & 48,620 & 42,418\\
    Shrikhande & 16 & 48 & 10 & 50,627 & 8,736 & 61,456\\
    Paley & 17 & 68 & 12 & 114,479 & \underline{4,760} & 129,474\\
    Goethals Seidel & 16 & 72 & 12 & 54,833 & \underline{1,120} & 65296\vspace{1.1mm}\\
    \bottomrule
  \end{tabular}}
\smallskip

\noindent We have performed the same experiment on various random graph models. For
each model we picked $25$ graphs at random and build the mean over all
instances, where each instance contributed values for its minimal
$k$. We used all 3 models with $N=25$ and, for the first two with
$p=0.33$; and for the later two with $K=5$. For a detailed description
of the models see for instance~\cite{bollobas1998random}.
\begin{center} 
  {\scriptsize
  \begin{tabular}{lccc}
    \toprule
    Model & $|\pit(G,\mathrm{OPT})|$ & $|\arena(G,\mathrm{OPT})|$ & $|\colosseum(G,\mathrm{OPT})|$\\
    \cmidrule(rl){1-4}
    Erd\H{o}s–R\'{e}nyi & 66,320 & 342,918 & 503,767 \\
    Watts Strogats & 15,323 & 192,185 & 108,074 \\
    Barab\'{a}si Albert & 61,147 & 352,716 & 551,661 \\
    \bottomrule
  \end{tabular}}
\end{center}
Finally, we observe the growth of the \textcolor{unired}{pit}, the \textcolor{unigreen}{arena}, and the
\textcolor{uniblue}{colosseum} for a fixed graph if we raise $k$ from $2$ to the optimal
value. While the arena shows its binomial behavior, the
colosseum is in many early stages actually smaller then
the arena. This effect is even more extreme for the pit, which is
\emph{very} small for $k$ that are smaller then the optimum. This
makes the technique especially well suited to establish lower bounds,
an observation also made by Tamaki~\cite{Tamaki2017}.

\begin{center}
  \includegraphics{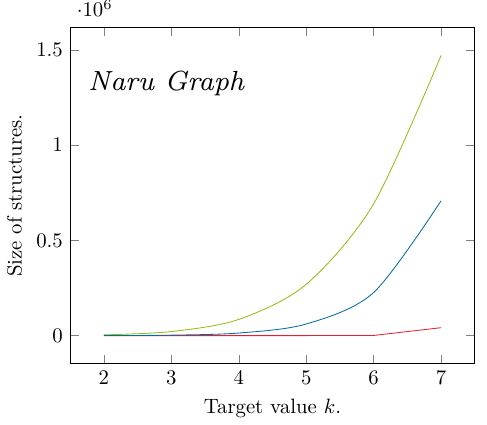}\qquad
  \includegraphics{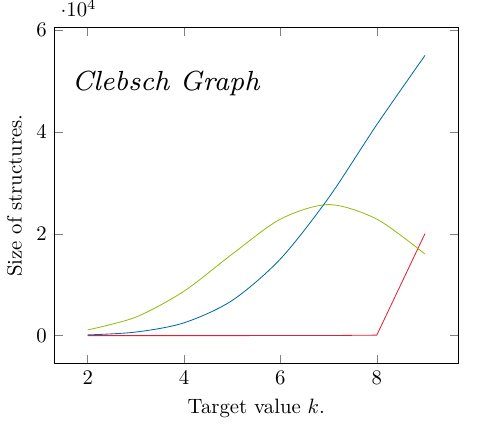}
  
  \includegraphics{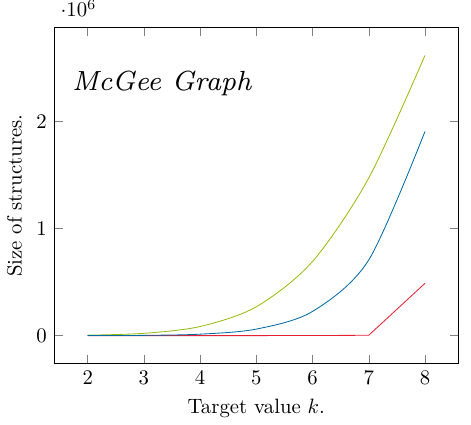}\qquad
  \includegraphics{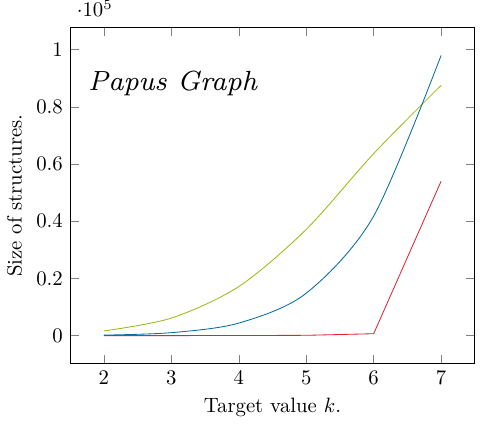}
\end{center}

\section{Conclusion and Outlook}
Treewidth is one of the most useful graph parameters that is
successfully used in many different areas. The Positive-Instance
Driven algorithm of Tamaki has led to the first practically relevant
algorithm for this parameter.  We have formalized Tamaki's algorithm
in the more general setting of graph searching, which has allowed us
to (i) provide a clean and simple formulation; and (ii) extend the
algorithm to many natural graph parameters.  With a few further
modification of the colosseum, our approach can also be used for the
notion of \emph{special-treewidth}~\cite{Courcelle12}. We assume that
a similar modification may also be possible for other parameters such
as \emph{spaghetti-treewidth}~\cite{BodlaenderKKKO17}.

\paragraph{Acknowledgements:}The authors would like to thank Jan Arne Telle and Fedor Fomin for helpful
discussions about the topic and its presentation.

\clearpage

\end{document}